\documentclass[11pt]{article}

\usepackage[svgnames]{xcolor} 
\usepackage{fullpage}
\usepackage{tikz-cd}
\usepackage{etoolbox}
\usepackage{amsmath,amssymb,amsthm}
\usepackage{microtype}

\usepackage{chngcntr} 
\usepackage{apptools}
\AtAppendix{\counterwithin{theorem}{subsection}} 
\usepackage{thmtools}
\usepackage{thm-restate} 
\usepackage{tikz}

\newtheorem{theorem}{Theorem}
\newtheorem{lemma}[theorem]{Lemma}
\newtheorem{corollary}[theorem]{Corollary}
\newtheorem{definition}[theorem]{Definition}

\newtheorem{remark}{Remark}

\usepackage[paper,hyperref]{knowledge}
\knowledgeconfigure{notion,quotation,protect quotation={tikzcd}}
\renewrobustcmd\phantomintro[1]{\kl[#1]{}}

\usetikzlibrary{decorations.markings} \tikzset{negated/.style={
    decoration={markings, mark= at position 0.5 with { \node[transform
        shape,xscale=.8,yscale=.4] (tempnode) {$\slash$}; } },
    postaction={decorate} } }



\usepackage{cite}


\knowledgedirective{math notion}{notion}


\knowledge {category representing input words}{}

\knowledge {knowledged}{url={https://ctan.org/pkg/knowledge?lang=en},italic}


\knowledge {dfaclassic}{notion}

\knowledge {statesclassic}{notion}
\knowledge {initialclassic}{notion}
\knowledge {finalclassic}{notion}
\knowledge {transition classic}{notion}
\knowledge {runclassic}{notion}
\knowledge {acceptingclassic}{notion}
\knowledge {acceptsclassic}{synonym}
\knowledge {languageclassic}{notion}
\knowledge {non-deterministic automaton}{notion}


\knowledge {$\catC $-automaton}{notion}
\knowledge {$\catC $-automata}{synonym}
\knowledge {$\catD $-automaton}{synonym}
\knowledge {automata in the category}{synonym}
\knowledge {automaton in $\KlTrans $}{synonym}
\knowledge {automaton in $\Set $}{synonym}
\knowledge {automata}{synonym}
\knowledge {automata in the Kleisli category of a suitably chosen monad}{synonym}
\knowledge {automaton}{synonym}
\knowledge {$\Rel $-automata}{synonym}
\knowledge {$\Rel $-automaton}{synonym}
\knowledge {$\Setop$-automaton}{synonym}
\knowledge {$\Set$-automaton}{synonym}

\knowledge {$(\KlT , 1,1)$-automaton}{notion}
\knowledge {$(\Set , 1,B^*+1)$-automaton}{notion}

\knowledge {$\KlTrans $-automaton}{notion}
\knowledge {\Im }{notion}

\knowledge {$\catC $-language}{notion}
\knowledge {$\catC $-languages}{synonym}
\knowledge {languages}{synonym}
\knowledge {language}{synonym}

\knowledge {\Auto }{notion}

\knowledge {accept}[accepts]{notion}
\knowledge {$\KlTrans $-automata for $\langLKlT $}{synonym}
\knowledge {$\Set $-automata accepting $\langLSet $}{synonym}
\knowledge {for a language}{synonym}
\knowledge {language accepted}{synonym}
\knowledge {automata for a language}{synonym}
\knowledge {automata for languages}{synonym}
\knowledge {accepting}{synonym}

\knowledge {initial automaton}{notion}
\knowledge {initial $\langL $-automaton}{synonym}

\knowledge {final automaton}{notion}
\knowledge {final automata}{synonym}
\knowledge {final $\langL $-automaton}{synonym}

\knowledge {\objectLeft }{math notion}
\knowledge {\objectRight }{math notion}
\knowledge {\symbLeft }{math notion}
\knowledge {\symbRight }{math notion}
\knowledge {\objectCenter }{math notion}

\knowledge {$(\catC , X,Y)$-automaton}{notion}
\knowledge {$(\KlTrans ,1,1)$-automaton}{synonym}
\knowledge {$(\KlTrans ,1,1)$-automata}{synonym}
\knowledge {$(\Set ,1,\alphabetB ^*)$-automata}{synonym}

\knowledge {$(\catC , X,Y)$-language}{notion}
\knowledge {$(\catC ,I,GO)$-languages}{synonym}
\knowledge {$(\catD ,FI,O)$-languages}{synonym}
\knowledge {$(\catC ,I,GO)$-language}{synonym}
\knowledge {$(\catD ,FI,O)$-language}{synonym}
\knowledge {$(\Set ,1,2)$-language}{synonym}
\knowledge {$(\Setop ,2,1)$-language}{synonym}
\knowledge {$(\Rel ,1,1)$-language}{synonym}
\knowledge {$(\KlT , 1,1)$-language}{synonym}
\knowledge {$(\Set , 1,B^*+1)$-language}{synonym}
\knowledge {$(\KlT , \FreeTrans 1,1)$-language}{synonym}
\knowledge {$(\KlTrans ,1,1)$-language}{synonym}
\knowledge {$(\KlTrans ,1,1)$-languages}{synonym}
\knowledge {$(\Set , 1,\UTrans 1)$-language}{synonym}

\knowledge\catReach{math notion}
\knowledge\catObs{math notion}

\knowledge {\Min }{math notion}
\knowledge {\Obs }{math notion}
\knowledge {\Reach }{math notion}

\knowledge {-divides}{notion}
\knowledge {-divisibility}{synonym}

\knowledge {-quotient}{notion}
\knowledge {-quotients}{synonym}

\knowledge {-subobject}{notion}
\knowledge {-subobjects}{synonym}

\knowledge {-minimal}{notion}
\knowledge {minimal automaton}{synonym}
\knowledge {minimal}{synonym}

\knowledge \EpiAut{math notion}
\knowledge \MonoAut{math notion}

\knowledge {observable}{notion}
\knowledge {observable quotient automaton}{synonym}

\knowledge {reachable}{notion}
\knowledge {reachable sub-automaton}{synonym}


\knowledge {subsequential transducer}{notion}
\knowledge {Subsequential transducers}{synonym}
\knowledge {subsequential transducers}{synonym}

\knowledge {input alphabetSS}{notion}
\knowledge {output alphabetSS}{notion}
\knowledge {statesSS}{notion}
\knowledge {initial state of the transducer}{notion}
\knowledge {initialization value}{notion}

\knowledge {initialSS}{notion}

\knowledge {\semTrans }{math notion}

\knowledge {partial transition function for the letter~$a$}{notion}
\knowledge {partial transition function}{synonym}

\knowledge {partial termination function}{notion}

\knowledge {partial production function for the letter~$a$}{notion}
\knowledge {partial production function}{synonym}


\knowledge {initial transducer}{notion}
\knowledge {final transducer}{notion}
\knowledge {factorization transducer}{notion}

\knowledge {\monadTrans }{math notion}
\knowledge {\KlTrans }{math notion}
\knowledge {Kleisli category}{notion}
\knowledge {Kleisli category for $\monadTrans $}{synonym}

\knowledge {Kleisli morphism}{notion}
\knowledge {Kleisli morphisms}{synonym}

\knowledge {\FreeTrans}{math notion}
\knowledge {\UTrans}{math notion}

\knowledge {\botfun }{math notion}

\knowledge {longest common prefix}{notion}
\knowledge {\lcp }{math notion}

\knowledge {irreducible function}{notion}
\knowledge {irreducible}{synonym}
\knowledge {\Irr }{math notion}

\knowledge {reduction of~$K$}{notion}
\knowledge {\red }{math notion}

\knowledge {\suff }{math notion}
\knowledge {\pstar}{math notion}

\knowledge {Brzozowski's algorithm}{notion}
\knowledge {Brzozowski's minimization algorithm}{synonym}

\knowledge {\FPow }{math notion}
\knowledge {\UPow }{math notion}
\knowledge {\UPowop }{math notion}
\knowledge {\FPowop }{math notion}

\knowledge {\EpiKlT }{math notion}
\knowledge {\MonoKlT }{math notion}

\knowledge {\determinize }{math notion}
\knowledge {\transpose }{math notion}
\knowledge {\codeterminize }{math notion}
\knowledge {determinization functor}{notion}
\knowledge {codeterminization functor}{notion}

\knowledge {\inv }{}


\knowledge {category}{}
\knowledge {isomorphic}{}
\knowledge {object}{}
\knowledge {objects}{synonym}

\knowledge {initial object}{}
\knowledge {initial}{synonym}
\knowledge {final object}{}
\knowledge {final}{synonym}

\knowledge {isomorphism}{}
\knowledge {factorization}{}
\knowledge {factorization systems}{}
\knowledge {factorization system}{synonym}

\knowledge {adjunctions}{}
\knowledge {-factorization}{}
\knowledge {diagonal property}{}
\knowledge {Kan extensions}{}

\knowledge {regular}{}
\knowledge {regular epis}{}
\knowledge {algebraic categories}{}
\knowledge {quotients of algebras}{}
\knowledge {subalgebras}{}
\knowledge {composition}{}

\knowledge {knowledge}{url={https://ctan.org/pkg/knowledge?lang=en}}





\newrobustcmd\cat[1]{\mathcal{#1}}
\newrobustcmd\catA{\cat A}
\newrobustcmd\catC{\cat C}
\newrobustcmd\catD{\cat D}
\newrobustcmd\catE{\cat E}
\newrobustcmd\catI{\cat I}
\newrobustcmd\catO{\cat O}
\newrobustcmd\catS{\cat S}
\newrobustcmd\catK{\cat K}

\newrobustcmd\op{\mathit{op}}
\newrobustcmd\Auto{\kl[\Auto]{\mathsf{Auto}}}
\newrobustcmd\catAutoL{\Auto(\langL)}
\newrobustcmd\catAutoLC{\Auto(\langL_\catC)}
\newrobustcmd\catAutoLD{\Auto(\langL_\catD)}
\newrobustcmd\catAutoLSet{\Auto(\langL_\Set)}
\newrobustcmd\catAutoLSetop{\Auto(\langL_{\Setop})}
\newrobustcmd\catAutoLRel{\Auto(\langL_\Rel)}
\newrobustcmd\catAutoLRelop{\Auto(\langL_\Relop)}
\newrobustcmd\catAutoLRelrev{\Auto(\langL_\Rel^\inv)}
\newrobustcmd\catAutoLSetrev{\Auto(\langL_\Set^\inv)}
\newrobustcmd\Set{\mathsf{Set}}
\newrobustcmd\Setop{\mathsf{Set^\op}}
\newrobustcmd\Rel{\mathsf{Rel}}
\newrobustcmd\Relop{\mathsf{Rel^\op}}


\newrobustcmd\lang[1]{\mathcal{#1}}
\newrobustcmd\langL{\lang L}
\newrobustcmd\langLSet{\lang L_{\Set}}
\newrobustcmd\langLEMT{\lang L_{\EMT}}
\newrobustcmd\langLSetop{\lang L_{\Setop}}
\newrobustcmd\langLRel{\lang L_{\Rel}}
\newrobustcmd\langLRelop{\lang L_{\Relop}}
\newrobustcmd\langLRelrev{\lang L_{\Rel}^{\inv}}
\newrobustcmd\langLSetrev{\lang L_{\Set}^{\inv}}
\newrobustcmd\aut[1]{\mathcal{#1}}
\newrobustcmd\autA{\aut A}
\newrobustcmd\autB{\aut B}
\newrobustcmd\autC{\aut C}

\newrobustcmd\Reach{\kl[\Reach]{\mathtt{Reach}}}
\newrobustcmd\Obs{\kl[\Obs]{\mathtt{Obs}}}
\newrobustcmd\Min{\kl[\Min]{\mathtt{Min}}}
\newrobustcmd{\ReachA}{\Reach{(\autA)}}
\newrobustcmd{\ObsA}{\Obs{(\autA)}}
\newrobustcmd\MinL{\Min{(\langL)}}

\newrobustcmd\AutInit{\autA^{\kl[initial automaton]{\mathit{init}}}}
\newrobustcmd\AutFin{\autA^{\kl[final automaton]{\mathit{final}}}}

\newrobustcmd\autAinitL{\AutInit(\langL)}
\newrobustcmd\autAfinalL{\AutFin(\langL)}


\newrobustcmd\symbLeft{\kl[\symbLeft]\triangleright}
\newrobustcmd\symbRight{\kl[\symbRight]\triangleleft}
\newrobustcmd\objectLeft{\kl[\objectLeft]{\mathsf{in}}}
\newrobustcmd\objectRight{\kl[\objectRight]{\mathsf{out}}}
\newrobustcmd\objectCenter{\kl[\objectCenter]{\mathsf{states}}}


\newrobustcmd\EpiAutL{\kl[\EpiAutL]{\mathcal{E}_{\mathsf{Auto}(\langL)}}}
\newrobustcmd\MonoAutL{\kl[\MonoAutL]{\mathcal{M}_{\mathsf{Auto}(\langL)}}}

\newrobustcmd\EpiAut{\kl[\EpiAutL]{\mathcal{E}_{\mathsf{Auto}(\langL_{\KlTrans})}}}
\newrobustcmd\MonoAut{\kl[\MonoAutL]{\mathcal{M}_{\mathsf{Auto}(\langL_{\KlTrans})}}}
\knowledge{\EpiAutL}{math notion}
\knowledge{\MonoAutL}{math notion}

\newrobustcmd\Epi{\ensuremath{\mathcal{E}}}
\newrobustcmd\Mono{\ensuremath{\mathcal{M}}}


\newrobustcmd\Lan[2]{\mathsf{Lan}_{#2}{#1}}
\newrobustcmd\Ran[2]{\mathsf{Ran}_{#2}{#1}}

\newrobustcmd\St{\mathsf{State}}
\newrobustcmd\MB{M(B)}
\newrobustcmd\tokl\nrightarrow
\newrobustcmd\TM{T_{M}}
\newrobustcmd\conv[1]{\breve{#1}}

\newrobustcmd\catAutoLKlT{\Auto(\langL_{\KlTrans})}
\newrobustcmd\catAutoLEMT{\Auto(\langL_{\EMT})}
\newrobustcmd\monadTrans{\kl[\monadTrans]{\mathcal{T}}}
\newrobustcmd\KlTrans{\kl[\KlTrans]{\mathsf{Kl}(\mathcal{T})}}
\newrobustcmd\EMT{\kl[\EMT]{\mathsf{EM}(\mathcal{T})}}
\newrobustcmd\langLKlT{\lang L_{\KlTrans}}
\newrobustcmd\EpiKlT{\kl[\EpiKlT]{\mathcal{E}_{\mathsf{Kl}(\mathcal{T})}}}
\newrobustcmd\MonoKlT{\kl[\MonoKlT]{\mathcal{M}_{\mathsf{Kl}(\mathcal{T})}}}
\newrobustcmd\semTrans[1]{\kl[\semTrans]{[\![}#1\kl[\semTrans]{]\!]}}
\newrobustcmd\KlT{\KlTrans}
\newrobustcmd\FreeTrans{\kl[\FreeTrans]{F_{\mathcal{T}}}}
\newrobustcmd\FreeTransEM{\kl[\FreeTransEM]{F^{\mathcal{T}}}}
\newrobustcmd\UTrans{\kl[\UTrans]{U_{\mathcal{T}}}}
\newrobustcmd\UTransEM{\kl[\UTransEM]{U^{\mathcal{T}}}}

\newrobustcmd\pstar{\mathrel{\kl[\pstar]{\star}}}


\newrobustcmd\FreeAut{\overline{\FreeTrans}}
\newrobustcmd\FreeAutEM{\overline{\FreeTransEM}}
\newrobustcmd\AutUTrans{\overline{\UTrans}}
\newrobustcmd\AutUTransEM{\overline{\UTransEM}}
\newrobustcmd\dom{\mathrm{dom}}
\newrobustcmd\FPow{\kl[\FPow]{F_{\mathcal{P}}}}
\newrobustcmd\UPow{\kl[\UPow]{U_{\mathcal{P}}}}
\newrobustcmd\Pow{\mathcal{P}}
\newrobustcmd\rev{\mathcal{R}}
\newrobustcmd\FPowop{\kl[\FPowop]{F^\op_{\mathcal{P}}}}
\newrobustcmd\UPowop{\kl[\UPowop]{U^\op_{\mathcal{P}}}}

\newrobustcmd\botfun{\kl[\botfun]{\kappa_\bot}}
\newrobustcmd\IrrAB{\kl[\Irr]{\mathsf{Irr}}(A^*,B^*)}
\newrobustcmd\red{\kl[\red]{\mathsf{red}}}
\newrobustcmd\lcp{\kl[\lcp]{\mathsf{lcp}}}
\newrobustcmd\redL{\red(L)}
\newrobustcmd\redK{\red(K)}
\newrobustcmd\lcpK{\lcp(K)}
\newrobustcmd\lcpL{\lcp(L)}

\newrobustcmd\suff{\kl[\suff]{\mathsf{suff}}}
\newrobustcmd\inv{{\kl[\inv]{\mathsf{rev}}}}

\newrobustcmd\swap{\mathsf{swap}}

\renewrobustcmd\Im{\kl[\Im]{\mathrm{Im}}}

\newrobustcmd\catObsAutoLSetop{\kl[\catObs]{\mathsf{Obs}(\langL_{\Setop})}}
\newrobustcmd\catReachAutoLSet{\kl[\catReach]{\mathsf{Reach}(\langL_{\Set})}}
\newrobustcmd\catObsAutoLC{\kl[\catObs]{\mathsf{Obs}(\langL)}}
\newrobustcmd\catReachAutoLC{\kl[\catReach]{\mathsf{Reach}(\langL)}}
\newrobustcmd\catReachK{\kl[\catReach]{\mathsf{Reach}(\catK)}}
\newrobustcmd\catObsK{\kl[\catObs]{\mathsf{Obs}(\catK)}}


\title{Automata Minimization: a Functorial Approach\footnote{This work
    was supported by the European Research Council (ERC) under the
    European Union’s Horizon 2020 research and innovation programme
    (grant agreement No.670624), and by the DeLTA ANR project
    (ANR-16-CE40-0007). The authors also thank the Simons Institute
    for the Theory of Computing where this work has been partly
    developed.}  \footnote{This is the "knowledge@knowledged" enriched version
    of the same paper published in the proceedings of CALCO 2017.}}


\author{
  Thomas Colcombet and
  Daniela Petri{\c s}an\\
  {\small CNRS, IRIF, Univ. Paris-Diderot, Paris 7, France}\\
  \texttt{\{thomas.colcombet,petrisan\}@irif.fr}
}




\date{}

\begin{document}

\maketitle

\begin{abstract}
  In this paper we regard languages and their acceptors -- such as
  deterministic or weighted automata, transducers, or monoids -- as
  functors from input categories that specify the type of the
  languages and of the machines to categories that specify the type of
  outputs.

  Our results are as follows:
  a) We provide sufficient conditions on the output category so that
  minimization of the corresponding automata is guaranteed.
  b) We show how to lift adjunctions between the categories for output
  values to adjunctions between categories of automata.
  c) We show how this framework can be applied to several
  phenomena in automata theory, starting with determinization and
  minimization (previously studied from a coalgebraic
  and duality theoretic perspective). We apply in particular  these techniques to
  Choffrut's minimization algorithm for subsequential
  transducers and revisit Brzozowski's minimization algorithm.
\end{abstract}

\section{Introduction}
\label{sec:intro}

There is a long tradition of interpreting results of automata theory
using the lens of category theory.  Typical instances of this scheme
interpret automata as algebras (together with a final map) as put
forward in~\cite{ArbibManes75,goguen1972,AdamekTrnkova89}, or as
coalgebras (together with an initial map), see for
example~\cite{Jacobs97atutorial}. This dual narrative proved very
useful~\cite{BonchiBHPRS14} in explaining at an abstract level
"Brzozowski's minimization@Brzozowski's minimization algorithm"
"algorithm@Brzozowski's minimization algorithm" and the duality between
reachability and observability (which goes back all the way to the
work of Arbib, Manes and Kalman).

In this paper, we adopt a slightly different approach, and we define
directly the notion of an "automaton" (over finite words) as a functor
from a "category representing input words", to a category representing
the computation and output spaces.  The notions of a "language" and of
a "language accepted" by an automaton are adapted along the same
pattern.

We provide several developments around this idea. First, we recall
(see~\cite{icalp2017-submission}) that the existence of a "minimal
automaton" for a "language" is guaranteed by the existence of an
"initial" and a "final automaton" in combination with a "factorization
system". Additionally, we explain how, in the functor presentation
that we have adopted, the existence of "initial" and "final automata"
"for a language" can be phrased in terms of "Kan extensions".  We also
show how adjunctions between categories can be lifted to the level of
"automata for a language" in these categories
(Lemma~\ref{lem:lifting-adjunctions}). This lifting accounts for
several constructions in automata theory, determinization to start
with.  We then use this framework in the explanation of two well-known
constructions in automata theory.

The most involved contribution (Theorem~\ref{theorem:minimization
  transducer}) is to rephrase the minimization result of Choffrut for
"subsequential transducers" in this framework. We do this by
instantiating the category of outputs with the Kleisli category for
the monad $TX=B^*\times X+1$, where $B$ is the output alphabet of the
transducers. In this case, despite the lack of completeness of the
ambient category, one can still prove the existence of an initial and
final automaton, as well as, surprisingly, of a factorization system.

The second concrete application is a proof of correctness of
Brzozowski's minimization algorithm.  Indeed, determinization of
automata can be understood as lifting the Kleisli adjunction between
the categories $\Rel$ (of sets and relations) and $\Set$ (of sets and
functions); and reversing a nondeterministic automaton can be
understood as lifting the self-duality of $\Rel$.  In
Section~\ref{sec:brzozowski} we show how Brzozowski's minimization
algorithm can be explained by combining several liftings of
adjunctions, and in particular as lifting the adjunction between
$\Set$ and its opposite category $\Set^\op$, thus recovering results
from~\cite{BonchiBHPRS14}.

\subparagraph*{Related work.}

Many of the constructions outlined here have already been explained
from a category-theoretic perspective, using various techniques. For
example, the relationship between minimization and duality was subject
to numerous papers, see~\cite{Bezhanishvili2012,BonchiBHPRS14,
  BonchiBRS12} and the references therein.  The coalgebraic
perspective on minimization was also emphasised in papers such
as~\cite{AdamekBHKMS:2012,SilvaEtAl:genPow}.  Understanding
determinization and codeterminization, as well as studying trace
semantics via lifting of adjunctions to coalgebras was considered
in~\cite{JacobsSS15,KerstanKW14}, and is related to our results from
Section~\ref{sec:det-codet-adj}. Subsequential transducers were
understood coalgebraically in~\cite{Hansen10}.

The paper which is closest in spirit to our work is a seemingly
forgotten paper~\cite{Bainbridge74}. However, in this work, Bainbridge
models the \emph{state space} of the machines as a functor.  Left and
right Kan extensions are featured in connection with the initial and
final automata, but in a slightly different
setting. Lemma~\ref{lem:lifting-adjunctions}, which albeit technically
simple, has surprisingly many applications, builds directly on his
work.

\newrobustcmd\alphabetA{A} \newrobustcmd\alphabetB{B}

\section{Languages and Automata as Functors}
\label{sec:lang-auto-functors}

In this section, we introduce the notion of automata via functors,
which is the common denominator of the different contributions of the
paper. We introduce this definition starting from the special case of
classical "deterministic automata@dfaclassic".

\AP In the standard definition, a ""deterministic
automaton@dfaclassic"" is a tuple:
\begin{align*}
  \langle Q,\alphabetA,q_0,F,\delta\rangle
\end{align*}
where $Q$ is a set of ""states@statesclassic"", $\alphabetA$ is an
alphabet (not necessarily finite), $q_0\in Q$ is the ""initial
state@initialclassic"", $F\subseteq Q$ is the set of ""final
states@finalclassic"", and $\delta_a\colon Q\rightarrow Q$ is the
""transition map@transition classic"" for all
letters~$a\in\alphabetA$.  The semantics of an automaton is given by
defining what is a ""run@runclassic"" over an input
word~$u\in\alphabetA^*$, and whether it is
"accepting@acceptingclassic" or not.  Given a word~$e=a_1\dots a_n$,
the "automaton@dfaclassic" ""accepts@acceptsclassic"" the word
if~$\delta_{a_n}\circ\dots\circ\delta_{a_1}(q_0)\in F$, and otherwise
\reintro[acceptsclassic]{rejects it}.

If we see $q_0$ as a map~$\mathit{init}$ from the one element
set~$1=\{0\}$ to~$Q$, that maps~$0$ to~$q_0$, and $F$ as a map
$\mathit{final}$ from~$Q$ to the set~$2=\{0,1\}$, where~$1$ means
`accept' and $0$ means `reject', then the semantics of the
"automaton@dfaclassic" is to associate to each word~$u=a_1\dots a_n$
the map from~$1$ to~$2$ defined
as~$\mathit{final}\circ\delta_{a_n}\circ\cdots\circ\delta_{a_1}\circ\mathit{init}$. If
this map is (constant equal to) $1$, this means that the word is
accepted, and otherwise it is rejected.

\noindent\begin{minipage}{10.5cm}
  ""@\objectLeft""%
  ""@\objectRight""%
  ""@\objectCenter""%
  Pushing this idea further, we can see the semantics of the automaton
  as a functor from the free category generated by the graph on the
  right to $\Set$, and more precisely one that sends the object
  $\objectLeft$ to~$1$ and~$\objectRight$ to~$2$.  In the above
  category, the arrows from $\objectLeft$ to $\objectRight$ are of the
  form $\symbLeft w\symbRight$ for $w$ an arbitrary word in
  $\alphabetA^*$.
\end{minipage} ~~
\begin{tikzcd}
  \objectLeft\arrow[r,"\symbLeft"] &
  \objectCenter\arrow[loop,looseness=6,swap,
  "a"]\arrow[r,"\symbRight"] & \objectRight
\end{tikzcd}

\AP Furthermore, since a ""language@languageclassic"" can be seen as a
map from~$\alphabetA^*$ to~$1\rightarrow 2$, we can model it as a
functor from the full subcategory on objects $\objectLeft$ and
$\objectRight$ to the category $\Set$, which maps $\objectLeft$ to~$1$
and~$\objectRight$ to~$2$.

\AP In this section we fix an arbitrary small category $\catI$ and a
full subcategory $\catO$. We denote by $\iota$ the inclusion functor
\[
\begin{tikzcd}
  \catO\arrow[r,hook,"\iota"] & \catI\,.
\end{tikzcd}
\]
We think of $\catI$ as a specification of the inner computations that
an automaton can perform, including black box behaviour, not
observable from the outside.
On the other hand, the full subcategory $\catO$ specifies the
observable behaviour of the automaton, that is, the language it
accepts.
In this interpretation, a machine/automaton $\autA$ is a functor from
$\catI$ to a category of outputs $\catC$, and the ``behaviour'' or
``language'' of $\autA$ is the functor $\langL(\autA)$ obtained by
precomposition with the inclusion $\begin{tikzcd}
  \catO\ar[hook]{r}{\iota}&\catI
\end{tikzcd}$. We obtain the following definition:
\noindent

\begin{definition}[languages and the categories of automata for them]\label{def:aut-accepts-lang}~\\
  \begin{minipage}{13cm}
    A ""$\catC$-language"" is a functor $\langL\colon\catO\to\catC$
    and a ""$\catC$-automaton"" is a functor
    $\autA\colon\catI\to\catC$.  A "$\catC$-automaton" $\autA$
    ""accepts"" a "$\catC$-language" $\langL$ when
    $\autA\circ\iota=\langL$; i.e. the diagram to the right commutes:
  \end{minipage}~~~~
  $
  \begin{tikzcd}
    \catO\ar{r}{\langL}\ar[hook]{d}[swap]{\iota} & \catC \\
    \catI\ar{ru}[swap]{\autA} &
  \end{tikzcd}
  $\\
  \noindent
  We write $\intro*\catAutoL$ for the subcategory of the functor
  category $[\catI,\catC]$ where
  \begin{enumerate}
  \item objects are "$\catC$-automata" that "accept" $\langL$.
  \item arrows are natural transformations $\alpha\colon\autA\to\autB$
    so that the natural transformation obtained by composition with
    the inclusion functor $\iota$ is the identity natural
    transfomation on $\langL$, that is, $\alpha\circ
    \iota=\mathit{id}_\langL$.
  \end{enumerate}
\end{definition}

\subsection{Minimization of "$\catC$-automata"}
\label{sec:minim-catAuto}

In this section we show that the notion of a "minimal automaton" is an
instance of a more generic notion of minimal object that can be
defined in an arbitrary category $\catK$ whenever there exist an
"initial object", a "final object", and a "factorization system"
$(\Epi,\Mono)$.

\AP Let $X,Y$ be two objects of $\catK$.  We say that:
\begin{center}
  $X$\quad$(\Epi,\Mono)$""-divides""\quad$Y$\qquad if\qquad $X$ is an
  \Epi""-quotient"" of an \Mono""-subobject"" of~$Y$.
\end{center}

Let us note immediately that in general this notion of
(\Epi,\Mono)"-divisibility" may not be transitive%
\footnote{There are nevertheless many situations for which it is the
  case; In particular when the category is "regular", and $\Epi$
  happens to be the class of "regular epis". This covers in particular
  the case of all "algebraic categories" with \Epi"-quotients" being
  the standard "quotients of algebras", and \Mono"-subobjects" being
  the standard "subalgebras".}.  \AP It is now natural to define an
"object"~$M$ to be (\Epi,\Mono)""-minimal"" in the "category", if it
(\Epi,\Mono)"-divides" all "objects" of the "category". Note that
there is no reason a priori that an (\Epi,\Mono)"-minimal" object in a
"category", if it exists, be unique up to "isomorphism". Nevertheless,
in our case, when the category has both "initial@initial object" and a
"final object", we can state the following minimization lemma:

\begin{lemma}\AP\label{lemma:minimal}%
  Let~$\catK$ be a "category" with "initial object"~$I$ and "final
  object"~$F$ and let $(\Epi,\Mono)$ be a "factorization system"
  for~$\catK$. Define for every "object"~$X$:
  \begin{itemize}
  \item $\intro*\Min{}$ to be the "factorization" of the "only
    arrow@initial object" from~$I$ to~$F$,
  \item $\intro*\Reach(X)$ to be the "factorization" of the "only
    arrow@initial object" from~$I$ to~$X$, and $\intro*\Obs(X)$ to be
    the "factorization" of the "only arrow@final object" from~$X$
    to~$F$.
  \end{itemize}
  Then
  \begin{itemize}
  \item $\Min{}$ is (\Epi,\Mono)"-minimal", and
  \item $\Min{}$ is "isomorphic" to both $\Obs(\Reach(X))$ and
    $\Reach(\Obs(X))$ for all "objects"~$X$.
  \end{itemize}
\end{lemma}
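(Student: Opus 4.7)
The plan is to derive both conclusions from a single observation, namely that $(\Epi,\Mono)$-factorizations are unique up to unique isomorphism, combined with the standard fact that both $\Epi$ and $\Mono$ are closed under composition (this follows from the diagonal fill-in property of a factorization system). The structural point is that since $I$ is initial and $F$ is final, the unique arrow $I\to F$ coincides with the composite $I\to X\to F$ for any object $X$. Consequently, \emph{any} $(\Epi,\Mono)$-factorization of $I\to X\to F$ must be isomorphic to $\Min$ by uniqueness of factorizations.

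For the isomorphism $\Min\cong \Obs(\Reach(X))$, I would factor $I\to X$ as $I\xrightarrow{e_1}\Reach(X)\xrightarrow{m_1}X$, then factor the composite $\Reach(X)\xrightarrow{m_1}X\to F$ as $\Reach(X)\xrightarrow{e_2}\Obs(\Reach(X))\xrightarrow{m_2}F$. Pasting these, $I\xrightarrow{e_2\circ e_1}\Obs(\Reach(X))\xrightarrow{m_2}F$ is a factorization of $I\to F$ whose first arrow lies in $\Epi$ (by closure under composition) and whose second lies in $\Mono$. By uniqueness, $\Obs(\Reach(X))\cong\Min$. The symmetric identity $\Reach(\Obs(X))\cong \Min$ is obtained by dualising the argument and using closure of $\Mono$ under composition instead.

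The $(\Epi,\Mono)$-minimality of $\Min$ is then immediate: the isomorphism $\Min\cong\Obs(\Reach(X))$ exhibits $\Min$ as an $\Epi$-quotient of $\Reach(X)$, which in turn is an $\Mono$-subobject of $X$. Hence $\Min$ $(\Epi,\Mono)$-divides every object $X$ of $\catK$, which is precisely the definition of being $(\Epi,\Mono)$-minimal.

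I do not expect a significant obstacle: the argument is pure factorization-system bookkeeping, and the only delicate point is to treat $\Min$, $\Reach(X)$, $\Obs(X)$ as defined up to isomorphism rather than on the nose, so that the various applications of uniqueness of factorization yield well-defined isomorphisms. The one ingredient worth flagging explicitly in the write-up is the closure of $\Epi$ and $\Mono$ under composition, since it is used in both halves of the proof.
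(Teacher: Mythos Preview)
Your proposal is correct and essentially identical to the paper's own proof: both factor $I\to X$ through $\Reach(X)$, then factor $\Reach(X)\to F$ through $\Obs(\Reach(X))$, use closure of $\Epi$ under composition to recognise the result as an $(\Epi,\Mono)$-factorization of $I\to F$, and invoke uniqueness of factorizations (the diagonal property) to conclude. The minimality claim and the dual identity $\Reach(\Obs(X))\cong\Min$ are handled the same way in both.
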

\begin{proof} The proof essentially consists of a diagram:
  \begin{center}
    \begin{tikzcd}[column sep={1.5cm,between origins},row
      sep={1.1cm,between origins}]
      &&& X\ar[rrrd,bend left=10] & &
      \\
      I\ar[rrru,bend left=10]\ar[rrrd,bend right=10,two
      heads]\ar[rr,two heads] &&
      \Reach(X)\ar[ru,rightarrowtail]\ar[rr,two heads] &&
      \Obs(\Reach(X))\ar[rr, rightarrowtail] && F
      \\
      &&& \Min{} \ar[rrru,bend right=10,rightarrowtail]\ar[ru,dashed,
      no head]& &
    \end{tikzcd}
  \end{center}
  Using the definition of~$\Reach{}$ and~$\Obs{}$, and the fact that
  $\Epi$ is closed under "composition", we obtain that
  $\Obs(\Reach(X))$ is an (\Epi,\Mono)"-factorization" of the "only
  arrow@initial object" from~$I$ to~$F$. Thus, thanks to the "diagonal
  property" of a "factorization system", $\Min{}$
  and~$\Obs(\Reach(X))$ are "isomorphic". Hence, furthermore,
  since~$\Obs(\Reach(X))$ (\Epi,\Mono)"-divides"~$X$ by construction,
  the same holds for~$\Min{}$.  In a symmetric way, $\Reach(\Obs(X))$
  is also "isomorphic" to~$\Min{}$.
\end{proof}

\AP An object $X$ of $\catK$ is called ""reachable"" when $X$ is
isomorphic to $\Reach(X)$. We denote by $\intro*\catReachK$ the full
subcategory of $\catK$ consisting of "reachable" objects. \AP
Similarly, an object $X$ of $\catK$ is called ""observable"" when $X$
is isomorphic to $\Obs(X)$. We denote by $\intro*\catObsK$ the full
subcategory of $\catK$ consisting of "observable" objects.

\AP We can express reachability $\Reach$ and observability $\Obs$ as
the right, respectively the left adjoint to the inclusion of
$\catReachK$, respectively of $\catObsK$ into $\catK$. It is indeed a
standard fact that factorization systems give rise to reflective
subcategories, see~\cite{cassidy_hebert_kelly_1985}. In our case, this
is the reflective subcategory $\catObsK$ of $\catK$. By a dual
argument, the category $\catReachK$ is coreflective in $\catK$. We can
summarize these facts in the next lemma.
\begin{lemma}
  \label{lem:adj-reach-obs}
  Let~$\catK$ be a "category" with "initial object"~$I$ and "final
  object"~$F$ and let $(\Epi,\Mono)$ be a "factorization system"
  for~$\catK$.  We have the adjunctions
\end{lemma}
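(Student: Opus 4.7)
The plan is to read off both adjunctions directly from the $(\Epi,\Mono)$-factorizations produced in Lemma~\ref{lemma:minimal}, using the diagonal fill-in property of the factorization system to witness the required universal properties. I will treat the reflective case ($\Obs$ as left adjoint to the inclusion $\catObsK \hookrightarrow \catK$) in detail; the coreflective case ($\Reach$ as right adjoint to $\catReachK \hookrightarrow \catK$) is dual.

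First I fix, for every $X\in\catK$, the $(\Epi,\Mono)$-factorization $X \twoheadrightarrow \Obs(X) \rightarrowtail F$ of the unique arrow from $X$ to the final object $F$. The first thing to check is that $\Obs(X)$ really lies in $\catObsK$: applying $\Obs$ to $\Obs(X)$ amounts to factoring the monic $m_X\colon \Obs(X) \rightarrowtail F$, whose $\Epi$-part is then an isomorphism, so $\Obs(\Obs(X)) \cong \Obs(X)$. Second, I propose the epi $e_X\colon X \twoheadrightarrow \Obs(X)$ as the $X$-component of the unit of the adjunction, and I verify the universal property: for any $f\colon X\to Y$ with $Y$ observable there is a unique $\bar f\colon \Obs(X)\to Y$ with $\bar f\circ e_X=f$.

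The key observation for the universal property is that observability of $Y$ forces the unique arrow $m_Y\colon Y\to F$ to be in $\Mono$ (its $\Epi$-part is already an iso). Hence the outer square
\[
\begin{tikzcd}
X \ar[r,"f"] \ar[d, two heads, "e_X"'] & Y \ar[d, rightarrowtail, "m_Y"] \\
\Obs(X) \ar[r, rightarrowtail, "m_X"'] & F
\end{tikzcd}
\]
commutes by uniqueness of arrows into $F$, and the "diagonal property" of $(\Epi,\Mono)$ delivers a unique diagonal $\bar f\colon \Obs(X)\to Y$ with $\bar f\circ e_X=f$ (and incidentally $m_Y\circ \bar f=m_X$). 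This is exactly the universal property of a reflection, so $\Obs$ extends to a functor $\catK\to\catObsK$ left adjoint to the inclusion, with unit $(e_X)_{X\in\catK}$ and counit given by the identity on observable objects.

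Dually, for each $X$ the factorization $I \twoheadrightarrow \Reach(X) \rightarrowtail X$ supplies a counit $m_X\colon \Reach(X)\to X$ exhibiting $\Reach$ as right adjoint to $\catReachK\hookrightarrow \catK$; the verification swaps the roles of $\Epi$ and $\Mono$ and of $I$ and $F$. The only nontrivial step in either direction is invoking the diagonal fill-in with the right square; I expect no genuine obstacle, since the argument is the standard derivation of a reflective subcategory from a factorization system (and, as noted in the text, this is a classical fact recorded in~\cite{cassidy_hebert_kelly_1985}).
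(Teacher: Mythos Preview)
Your argument is correct. The paper itself does not give a proof of this lemma: it simply records it as a summary of the preceding paragraph, which invokes the standard fact that factorization systems yield reflective subcategories and cites~\cite{cassidy_hebert_kelly_1985}. Your proposal is precisely an elementary unpacking of that standard fact---you verify directly, via the diagonal fill-in against $e_X\in\Epi$ and $m_Y\in\Mono$, that $e_X\colon X\twoheadrightarrow\Obs(X)$ has the universal property of a reflection unit, and dualize for $\Reach$. One small point worth making explicit in your uniqueness step: if $d'\colon\Obs(X)\to Y$ also satisfies $d'\circ e_X=f$, then $m_Y\circ d'$ is automatically equal to $m_X$ because $F$ is terminal, so $d'$ is a diagonal for the same square and hence equals $\bar f$ by orthogonality. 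With that remark your argument is complete and is exactly the proof the paper defers to the literature.
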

\[
\begin{tikzcd}[column sep={14mm,between origins}]
  \catReachK \ar[rr,bend left,hook] &\bot &\catK\ar[rr,bend left,
  "\Obs"]\ar[ll,bend left, "\Reach"] &\bot~ & \catObsK\,.\ar[ll,bend
  left, hook]
\end{tikzcd}
\]

\AP
In what follows we will instantiate $\catK$ with the category
$\catAutoL$ of $\catC$-automata accepting a language
$\langL$. Assuming the existence of an initial and final automaton for
$\langL$ -- denoted by $\intro*\autAinitL$, respectively $\intro*\autAfinalL$ --
and, of a factorization system, we obtain the functorial version of
the usual notions of "reachable sub-automaton" $\ReachA$ and
"observable quotient automaton" $\ObsA$ of an automaton $\autA$. The
"minimal automaton" $\MinL$ for the language $\langL$ is obtained via
the factorization
\[
\begin{tikzcd}
  \autAinitL \arrow[r,two heads] & \MinL\ar[r,tail]& \autAfinalL\,.
\end{tikzcd}
\]

Lemma~\ref{lemma:minimal} implies that the "minimal automaton" divides
any other automaton recognising the language, while
Lemma~\ref{lem:adj-reach-obs} instantiates to the results
of~\cite[Section~9.4]{BonchiBHPRS14}.

\subsection{Minimization of "$\catC$-automata": sufficient conditions
  on $\catC$}
\label{sec:minimization:suff-cond}
We now can list sufficient conditions on $\catC$ so that the category
of "$\catC$-automata" $\catAutoL$ accepting a "$\catC$-language"
$\langL$ satisfies the three conditions of Lemma~\ref{lemma:minimal}.

We start with the factorization system. It is well known that given a
"factorization system" $(\Epi,\Mono)$ on $\catC$, we can extend it to
a "factorization system"
$(\Epi_{[\catI,\catC]},\Mono_{[\catI,\catC]})$ on the functor category
$[\catI,\catC]$ in a pointwise fashion. That is a natural
transformation is in~$\Epi_{[\catI,\catC]}$ if all its components are
in~$\Epi$, and analogously, a natural transformation is
in~$\Mono_{[\catI,\catC]}$ if all its components are in~$\Mono$. In
turn, the factorization system on the functor category $[\catI,\catC]$
induces a factorization system on each subcategory $\catAutoL$.
\begin{lemma}
  \label{lem:lifting-fact-syst}
  If~$\catC$ has a factorization system $(\Epi,\Mono)$, then the
  category~$\catAutoL$ has a factorization
  system~$(\intro*\EpiAut,\intro*\MonoAut)$, where~$\EpiAut$ consists
  of all the natural transformations with components in~$\Epi$
  and~$\MonoAut$ consists of all natural transformations with
  components in~$\Mono$.
\end{lemma}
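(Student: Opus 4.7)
The plan is to lift the factorization system pointwise from $\catC$ to the functor category $[\catI,\catC]$, and then to check that this lifted system restricts to the subcategory $\catAutoL$. The main obstacle will be a mild alignment issue at the second step: the middle object of the factorization must satisfy $\autC\circ\iota = \langL$ \emph{on the nose}, not just up to isomorphism, which forces a careful choice of factorization on the image of $\iota$.

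For the functor category step, given a natural transformation $\alpha\colon\autA\to\autB$ in $[\catI,\catC]$, I factor each component as $\alpha_X = m_X\circ e_X$ through an object $\autC(X)$ with $e_X\in\Epi$ and $m_X\in\Mono$. For each morphism $f\colon X\to Y$ of $\catI$, the naturality equation $\autB(f)\circ m_X\circ e_X = m_Y\circ e_Y\circ\autA(f)$ combined with the diagonal fill-in property applied to the square with $e_X$ on the left and $m_Y$ on the right yields a unique arrow $\autC(f)\colon\autC(X)\to\autC(Y)$ making both triangles commute. Uniqueness of the diagonal turns $\autC$ into a functor and promotes $e, m$ to natural transformations, producing the required factorization in $[\catI,\catC]$. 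The remaining axioms of a factorization system — closure of $\Epi_{[\catI,\catC]}$ and $\Mono_{[\catI,\catC]}$ under composition and isomorphism, together with the diagonal property itself — all lift from $\catC$ pointwise by the same uniqueness argument.

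To restrict this construction to $\catAutoL$, the key observation is that when $\alpha$ is a morphism of $\catAutoL$, the condition $\alpha\circ\iota = \mathit{id}_\langL$ forces $\alpha_X = \mathit{id}_{\langL(X)}$ for every $X\in\catO$. Since every identity lies in $\Epi\cap\Mono$, I choose the factorization at such $X$ canonically as the trivial one $\mathit{id}_{\langL(X)} = \mathit{id}_{\langL(X)}\circ\mathit{id}_{\langL(X)}$, so that $\autC(X) = \langL(X)$; for $X\notin\catO$ any factorization serves. Applying the previous step now produces $\autC$ with $\autC\circ\iota = \langL$, together with natural transformations $e\in\EpiAut$ and $m\in\MonoAut$ whose components on $\catO$ are identities, hence both $e$ and $m$ are morphisms of $\catAutoL$. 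The diagonal property in $\catAutoL$ inherits from the one in $[\catI,\catC]$: any diagonal produced there must have identity components on $\catO$, again by the uniqueness clause of the diagonal fill-in, so it automatically restricts to a morphism of $\catAutoL$.
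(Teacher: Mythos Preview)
Your proposal is correct and follows exactly the approach the paper indicates: the paper gives no detailed proof, stating only that ``the proof of Lemma~\ref{lem:lifting-fact-syst} is the same as the classical one that shows that factorization systems can be lifted to functor categories.'' You have spelled out precisely this classical pointwise lifting to $[\catI,\catC]$, together with the small additional care needed to restrict to $\catAutoL$ by choosing the trivial factorization at objects of $\catO$.
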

The proof of Lemma~\ref{lem:lifting-fact-syst} is the same as the
classical one that shows that "factorization systems" can be lifted to
functor categories.

\begin{lemma}
  If the left Kan extension $\Lan{\langL}{\iota}$ of~$\langL$
  along~$\iota$ exists, then it is an initial object in~$\catAutoL$,
  that is, $\autAinitL$ exists and is isomorphic
  to~$\Lan{\langL}{\iota}$.

  Dually, if the right Kan extension~$\Ran{\langL}{\iota}$ of~$\langL$
  along~$\iota$ exists, then so does the final object~$\autAfinalL$
  of~$\catAutoL$ and $\autAfinalL$ is isomorphic
  to~$\Ran{\langL}{\iota}$.
\end{lemma}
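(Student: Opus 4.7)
The plan is to derive the first claim directly from the universal property of left Kan extensions, and obtain the dual by a symmetric argument. The pivotal observation is that $\iota$, being the inclusion of a full subcategory, is fully faithful.

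First I would invoke the standard fact that when $\iota$ is fully faithful, the unit $\eta\colon \langL\to \Lan{\langL}{\iota}\circ\iota$ of the left Kan extension is a natural isomorphism. Replacing $\Lan{\langL}{\iota}$ by a naturally isomorphic functor if necessary — a harmless choice of representative, since Kan extensions are only defined up to isomorphism — I may assume $\Lan{\langL}{\iota}\circ\iota=\langL$ strictly on the nose. This upgrades $\Lan{\langL}{\iota}$ to an honest object of $\catAutoL$ in the sense of Definition~\ref{def:aut-accepts-lang}. Now, given any other object $\autB$ of $\catAutoL$, i.e.\ a functor $\autB\colon\catI\to\catC$ with $\autB\circ\iota=\langL$, the universal property of the left Kan extension furnishes, for each natural transformation $\alpha\colon \langL\to \autB\circ\iota$, a unique $\widetilde\alpha\colon \Lan{\langL}{\iota}\to \autB$ whose whiskering with $\iota$ equals $\alpha$. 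Applying this to $\alpha=\mathit{id}_\langL$ yields a unique natural transformation $\widetilde\alpha\colon \Lan{\langL}{\iota}\to \autB$ satisfying $\widetilde\alpha\circ\iota=\mathit{id}_\langL$, which is exactly the data of a unique arrow from $\Lan{\langL}{\iota}$ to $\autB$ in $\catAutoL$. Hence $\Lan{\langL}{\iota}$ is initial, and therefore isomorphic to $\autAinitL$.

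For the dual claim, I would run exactly the same argument in the opposite category: fully-faithfulness of $\iota$ still forces the counit of $\Ran{\langL}{\iota}$ to be invertible, so we may strictify $\Ran{\langL}{\iota}\circ\iota=\langL$; then the universal property of the right Kan extension applied to the identity of $\langL$ produces, for every $\autA\in\catAutoL$, a unique arrow $\autA\to \Ran{\langL}{\iota}$ in $\catAutoL$, establishing finality.

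The only point that demands care — and the mild obstacle — is reconciling the "up to isomorphism" nature of Kan extensions with the strict equality $\autA\circ\iota=\langL$ required by Definition~\ref{def:aut-accepts-lang}. Fully-faithfulness of $\iota$ makes this essentially automatic, but it is worth flagging explicitly, since otherwise the reader may worry about a mismatch between the strict commutativity in the definition of $\catAutoL$ and the looser universal property characterizing Kan extensions.
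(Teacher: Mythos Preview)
Your proposal is correct and follows essentially the same route as the paper's proof sketch: both hinge on the full faithfulness of $\iota$ to make the unit $\langL\to\Lan{\langL}{\iota}\circ\iota$ an isomorphism, then invoke the universal property of the Kan extension (applied to the identity on $\langL$) to produce the unique morphism in $\catAutoL$, with the right Kan extension handled by duality. Your explicit discussion of the strictification needed to match the strict equality in Definition~\ref{def:aut-accepts-lang} is a useful clarification that the paper's sketch leaves implicit.
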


\begin{proof}[Proof Sketch]
  Assume the left Kan extension exists. Then the canonical natural
  transformation $\langL\to \Lan{\langL}{\iota}\circ \iota$ is an
  isomorphism since $\iota$ is full and faithful. Whenever $\autA$
  accepts $\langL$, that is, $\autA\circ\iota=\langL$, we obtain the
  required unique morphism $\Lan{\langL}{\iota}\to\autA$ using the
  universal property of the Kan extension. The argument for the right
  Kan extension follows by duality.
\end{proof}

\begin{corollary}
  \label{cor:exist-Kan-suff-cond}
  Assume $\catC$ is complete, cocomplete and has a factorization
  system and let $\langL$ be a "$\catC$-language". Then the "initial
  $\langL$-automaton" and the "final $\langL$-automaton" exist and are
  given by the left, respectively right Kan extensions of $\langL$
  along $\iota$. Furthermore, the minimal $\catC$-automaton $\Min(\langL)$
  accepting $\langL$ is obtained via the factorization $
  \begin{tikzcd}
    \Lan{\langL}{\iota}\ar[r,two heads] & \Min(\langL)\ar[r,tail] &
    \Ran{\langL}{\iota}\,.
  \end{tikzcd}
  $
\end{corollary}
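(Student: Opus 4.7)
The plan is to assemble the three lemmas immediately preceding the corollary. Since $\catI$ was fixed to be small and $\catC$ is assumed cocomplete, the left Kan extension $\Lan{\langL}{\iota}$ exists (computed pointwise by the standard colimit formula). Dually, completeness of $\catC$ guarantees that the right Kan extension $\Ran{\langL}{\iota}$ exists. By the lemma immediately preceding the corollary, and using that $\iota$ is fully faithful (so the unit/counit of the Kan extension restricts to an identity on $\langL$), these Kan extensions are "$\catC$-automata" that "accept" $\langL$ and they are the "initial" and "final" objects of $\catAutoL$. Thus $\autAinitL \cong \Lan{\langL}{\iota}$ and $\autAfinalL \cong \Ran{\langL}{\iota}$.

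Next, I apply Lemma~\ref{lem:lifting-fact-syst}: the factorization system $(\Epi,\Mono)$ on $\catC$ lifts pointwise to a factorization system $(\EpiAut,\MonoAut)$ on $\catAutoL$. Together with the previous step, this shows that $\catAutoL$ now satisfies all three hypotheses of Lemma~\ref{lemma:minimal}: it has an "initial object", a "final object", and a "factorization system".

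Finally, instantiating Lemma~\ref{lemma:minimal} with $\catK = \catAutoL$, the minimum $\Min(\langL)$ is obtained as the $(\EpiAut,\MonoAut)$-factorization of the unique arrow from the initial to the final object, i.e.\ from $\Lan{\langL}{\iota}$ to $\Ran{\langL}{\iota}$, giving exactly the factorization displayed in the statement. No step is a real obstacle; the only mildly delicate point is to check that the initial/final objects produced by the Kan extensions are initial/final in the subcategory $\catAutoL$ (and not merely in $[\catI,\catC]$), which follows from the preceding lemma together with the fact that morphisms in $\catAutoL$ are precisely those natural transformations whose restriction along $\iota$ is the identity on $\langL$.
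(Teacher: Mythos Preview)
Your proposal is correct and is exactly the intended argument: the corollary has no separate proof in the paper because it is meant to follow immediately by combining the three preceding lemmas (existence of Kan extensions from (co)completeness of $\catC$ and smallness of $\catI$, the lemma identifying $\Lan{\langL}{\iota}$ and $\Ran{\langL}{\iota}$ as the initial and final objects of $\catAutoL$, and Lemma~\ref{lem:lifting-fact-syst} lifting the factorization system) and then applying Lemma~\ref{lemma:minimal}. Your remark about checking initiality/finality in $\catAutoL$ rather than in $[\catI,\catC]$ is apt and is precisely what the preceding lemma's proof sketch addresses.
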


\begin{remark}
  Depending on the category $\catI$, we may relax the conditions in
  Corollary~\ref{cor:exist-Kan-suff-cond},
  see~Lemma~\ref{lem:the-minimization-wheel}.  Furthermore, we
  emphasise that these conditions are only sufficient. In
  Section~\ref{sec:choffrut} we consider the example of subsequential
  transducers and we instantiate $\catC$ with a Kleisli
  category. Although this category does not have powers, the final
  automaton exists.
\end{remark}

\section{Word Automata}
\label{sec:word-input-category}

\noindent
\phantomintro\objectLeft\phantomintro\objectRight\phantomintro\objectCenter
\begin{minipage}{10.5cm}
  Hereafter, we restrict our attention to the case of word automata,
  for which the input category $\catI$ is the three-object category
  with arrows spanned by $\intro*\symbLeft$, $\intro*\symbRight$ and
  $a$ for all $a\in\alphabetA$, as in the diagram on the right and
  where the
\end{minipage}~~~
\begin{tikzcd}
  \reintro*\objectLeft\arrow[r,"\reintro*\symbLeft"] &
  \reintro*\objectCenter\arrow[loop,looseness=6,swap,
  "a"]\arrow[r,"\reintro*\symbRight"] & \reintro*\objectRight
\end{tikzcd}

\noindent composite of $
\begin{tikzcd}
  \objectCenter\ar[r,"w"] & \objectCenter\ar[r,"w'"] & \objectCenter
\end{tikzcd}
$ is given by the concatenation $ww'$.

\AP Let $\catO$ be the full subcategory of $\catI$ on objects
$\objectLeft$ and $\objectRight$.  We consider "$\catC$-languages",
which are now functors $\langL\colon\catO\to\catC$.  If
$\langL(\objectLeft)=X$ and $\langL(\objectRight)=Y$ we call $\langL$
a ""$(\catC, X,Y)$-language"".  Similarily, we consider
"$\catC$-automata" that are functors $\autA\colon\catI\to\catC$.  If
$\autA(\objectLeft)=X$ and $\autA(\objectRight)=Y$ we call $\autA$ a
""$(\catC, X,Y)$-automaton"".

The next lemma refines Corollary~\ref{cor:exist-Kan-suff-cond}.
It appeared in~\cite{Ballester-Bolinches15} in the "$(\Set,1,2)$-language" case.
\begin{lemma}[from~\cite{icalp2017-submission}]
  \label{lem:the-minimization-wheel}
  If $\catC$ has countable products and countable coproducts, and a
  factorization system, then the minimal "$\catC$-automaton"
  "accepting" $\langL$ is obtained via the "factorization" in the next
  diagram.
\end{lemma}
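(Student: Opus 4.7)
The plan is to refine Corollary~\ref{cor:exist-Kan-suff-cond} by exploiting the very concrete shape of the input category $\catI$: when $\langL$ is a "$\catC$-language" for word automata, the Kan extensions $\Lan{\langL}{\iota}$ and $\Ran{\langL}{\iota}$ reduce to countable coproducts and countable products respectively, so that countable (co)products suffice in lieu of full (co)completeness of $\catC$. Once these Kan extensions are shown to exist, the remainder of the argument is exactly that of Corollary~\ref{cor:exist-Kan-suff-cond}: lift the factorization system via Lemma~\ref{lem:lifting-fact-syst} and apply Lemma~\ref{lemma:minimal}.

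The main step is the pointwise computation of the two Kan extensions. At $\objectLeft$ and $\objectRight$ the relevant comma categories $\iota\downarrow(-)$ are trivial (the only arrows of $\catO$ into those targets are identities), so both extensions agree with $\langL$ there. At $\objectCenter$, the comma category $\iota \downarrow \objectCenter$ has one object $(\objectLeft, \symbLeft w)$ for every word $w \in \alphabetA^*$, and no non-identity morphisms, since $\catO$ has no endomorphisms other than the identities and $\catI$ has no arrows $\objectRight \to \objectCenter$; hence this comma category is discrete with object set $\alphabetA^*$. Consequently
\[
  \Lan{\langL}{\iota}(\objectCenter) \;=\; \coprod_{w \in \alphabetA^*} \langL(\objectLeft)
  \qquad\text{and}\qquad
  \Ran{\langL}{\iota}(\objectCenter) \;=\; \prod_{w \in \alphabetA^*} \langL(\objectRight),
\]
each of which requires only a countable (co)product of objects of $\catC$. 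The functorial action on the remaining arrows of $\catI$ is forced by the universal property: in $\Lan{\langL}{\iota}$ the transition for a letter $a$ sends the $w$-th coprojection to the $wa$-th one, and the output component $\symbRight$ on the $w$-summand equals $\langL(\symbLeft w \symbRight)$; the description of $\Ran{\langL}{\iota}$ is dual, with the $w$-projection of the input map $\symbLeft$ being $\langL(\symbLeft w \symbRight)$.

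With the initial automaton $\autAinitL \cong \Lan{\langL}{\iota}$ and the final automaton $\autAfinalL \cong \Ran{\langL}{\iota}$ thus available, Lemma~\ref{lem:lifting-fact-syst} lifts the factorization system of $\catC$ pointwise to $\catAutoL$, and Lemma~\ref{lemma:minimal} produces $\MinL$ as the $(\EpiAut,\MonoAut)$-factorization of the canonical morphism $\autAinitL \to \autAfinalL$, which is precisely the factorization displayed in the statement. The only arguable obstacle is the bookkeeping needed to verify that the families of maps produced by the universal property do assemble into functors $\catI \to \catC$ and that the assignments are natural in $\langL$; conceptually nothing is at stake beyond the observation that $\alphabetA^*$ is countable whenever $\alphabetA$ is.
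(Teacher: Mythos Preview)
Your argument is correct and matches the paper's approach: compute the Kan extensions pointwise, observe that the comma category at $\objectCenter$ is discrete on $\alphabetA^*$ (since there are no arrows $\objectRight\to\objectCenter$ in $\catI$), and conclude that countable (co)products suffice; then invoke Lemma~\ref{lem:lifting-fact-syst} and Lemma~\ref{lemma:minimal}. One small slip worth fixing: the comma category $\iota\downarrow\objectRight$ is \emph{not} trivial---it also contains the objects $(\objectLeft,\symbLeft w\symbRight)$ for every $w\in\alphabetA^*$---but it has $(\objectRight,\mathrm{id})$ as a terminal object, so the colimit is still $\langL(\objectRight)$; alternatively, simply use that $\iota$ is full and faithful, as the paper does in the proof sketch preceding Corollary~\ref{cor:exist-Kan-suff-cond}.
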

\noindent
\begin{minipage}[c]{10cm}
  The initial automaton has as state space the copower
  $\coprod\limits_{u\in
    A^*}\langL(\objectLeft)$. In~\cite{icalp2017-submission} we gave a
  direct proof of initiality, but here we can also notice that this is
  exactly what the colimit computation of the left Kan extension of
  $\langL$ along $\iota$ yields -- using the fact that there are no
  morphisms from $\objectRight$ to $\objectCenter$ in $\catI$ and the
  only morphism on which you take the colimit are of the form
  $\symbLeft w\colon \objectLeft\to\objectCenter$ for all
  $w\in\alphabetA^*$.
\end{minipage}
~~~
\begin{minipage}[b]{5cm}
  \begin{tikzcd}[column sep={1.9cm,between origins},row
    sep={1.7cm,between origins}]
    & \coprod\limits_{u\in A^*}\langL(\objectLeft)
    \arrow[rd,bend left, "L ?"]\arrow[d,two heads] &
    \\
    \langL(\objectLeft)
    \arrow[bend right,swap,"L"]{rd} \arrow[ru,,bend left,
    "\varepsilon"] \arrow[r,"i"] & \Min(\langL) \arrow[r,"f"] \arrow[d,
    tail] & \langL(\objectRight)
    \\
    & \prod\limits_{u\in A^*}\langL(\objectRight) \arrow[bend
    right]{ru}[swap]{\varepsilon ?} &
  \end{tikzcd}%
\end{minipage}

\subsection{Lifting Adjunctions to Categories of Automata}
\label{sec:lifting-adjucntions-to-automata}

In this section we will juggle with "languages" and "automata"
interpreted over different categories connected via "adjunctions".

Assume we have an adjunction between two categories $\catC$ and
$\catD$
\knowledgeconfigure{quotation=false}%
\[
\begin{tikzcd}
  \catC \ar[rr,shift left=2, "F"] & \bot &\catD\ar[ll,shift left=2,
  "G"]\,,
\end{tikzcd}
\]
\knowledgeconfigure{quotation=true}%
with $F\dashv G\colon \catD\to\catC$.  Let $(-)^*$ and $(-)_*$ denote
the induced natural isomorphisms between the homsets.  In particular,
given objects $I$ in $\catC$ and $O$ in $\catD$, we have bijections
\knowledgeconfigure{quotation=false}%
\begin{equation}
  \label{eq:bij-lang}
  \begin{tikzcd}[column sep=large]
    \catC(I,GO)\ar[rr,shift left=2, "(-)^*"] &
    &\catD(FI,O)\ar[ll,shift left=2, "(-)_*"]
  \end{tikzcd}
\end{equation}
\knowledgeconfigure{quotation=true}%
These bijections induce a one-to-one correspondence between
"$(\catC,I,GO)$-languages" and
"$(\catD,FI,O)$-@$(\catD,FI,O)$-languages"
"languages@$(\catD,FI,O)$-languages", which by an abuse of notation we
denote by the same symbols:
\[
\begin{tikzcd}[column sep=large]
  \text{\kl{$(\catC,I,GO)$-languages}}\ar[rr, shift left=2, "(-)^*"] &
  &\text{\kl{$(\catD,FI,O)$-languages}}\ar[ll,shift left=2, "(-)_*"]
\end{tikzcd}
\]

Indeed, given a "$(\catC,I,GO)$-language" $\langL\colon \catO\to\catC$
we obtain a "$(\catD,FI,O)$-language" $\langL^*\colon \catO\to\catD$
by setting $ \langL^*(\symbLeft w\symbRight) = (\langL(\symbLeft
w\symbRight))^* \in \catD(FI,O)$. Conversely, given a
"$(\catD,FI,O)$-language" $\langL'$ we obtain a
"$(\catC,I,GO)$-language" $(\langL')_*$ by setting
$(\langL')_*(\symbLeft w\symbRight)=(\langL'(\symbLeft
w\symbRight))_*$.

\begin{lemma}
  \label{lem:lifting-adjunctions}
  Assume $\langL_\catC$ and $\langL_\catD$ are "$(\catC,I,GO)$-@$(\catC,I,GO)$-languages",
  respectively "$(\catD,FI,O)$-languages" so that
  $\langL_\catD=(\langL_\catC)^*$. Then the adjunction $F\dashv G$
  lifts to an adjunction
  $\overline{F}\dashv\overline{G}\colon\catAutoLD\to\catAutoLC$. The
  lifted functors $\overline{F}$ and $\overline{G}$ are defined as
  $F$, resp. $G$ on the state object, that is, the following diagram
  commutes
  \knowledgeconfigure{quotation=false}%
  \[
  \begin{tikzcd}[column sep=large,row sep=5mm]
    \catAutoLC\ar[dd,"\St"] \ar[rr,bend left=12, "\overline{F}"] & \bot &\catAutoLD\ar[ll,bend left=12, "\overline{G}"]\ar[dd,"\St"]\\
    & & \\
    \catC \ar[rr,bend left=14, "F"] & \bot &\catD\ar[ll,bend left=14,
    "G"]
  \end{tikzcd}
  \]
  \knowledgeconfigure{quotation=true}%
  where the functor $\St\colon\catAutoLC\to\catC$ is the evaluation at
  $\objectCenter$, that is, it sends an automaton
  $\autA\colon\catI\to\catC$ to $\autA(\objectCenter)$.
\end{lemma}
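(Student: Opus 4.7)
The plan is to define the lifted functors so that they act as $F$, $G$ on the state object $\objectCenter$ while keeping the values at $\objectLeft$ and $\objectRight$ fixed at those prescribed by the target language, and adjusting the two boundary transitions via the hom-set bijection~\eqref{eq:bij-lang}. Concretely, given $\autA$ in $\catAutoLC$ with initial map $i = \autA(\symbLeft)\colon I \to \autA(\objectCenter)$, final map $f = \autA(\symbRight)\colon \autA(\objectCenter)\to GO$, and transitions $\delta_a = \autA(a)$, I would define $\overline{F}(\autA)$ to be the automaton with state object $F(\autA(\objectCenter))$, initial map $F(i)\colon FI \to F(\autA(\objectCenter))$, transitions $F(\delta_a)$, and final map the transpose $f^*\colon F(\autA(\objectCenter))\to O$. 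On a morphism $\alpha\colon \autA\to\autA'$, I set $\overline{F}(\alpha)_{\objectCenter} = F(\alpha_{\objectCenter})$ and take identities at $\objectLeft,\objectRight$. The definition of $\overline{G}$ is dual, using $(-)_*$ to transpose the $\symbLeft$-boundary from $FI \to \autB(\objectCenter)$ into $I \to G(\autB(\objectCenter))$.

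The first routine verification is that $\overline{F}(\autA)$ accepts $\langL_\catD$: using $f^* = \varepsilon_O \circ F(f)$ and functoriality of $F$, one gets $f^* \circ F(\delta_w) \circ F(i) = (f\circ \delta_w \circ i)^*$, which equals $\langL_\catD(\symbLeft w\symbRight)$ since $\langL_\catD = (\langL_\catC)^*$ by assumption. A similar computation, combined with naturality of $(-)^*$ in both arguments, shows that $\overline{F}(\alpha)$ is an arrow of $\catAutoLD$. Commutation of the outer square with the state functor $\St$ is then immediate from the definition.

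For the adjunction $\overline{F}\dashv\overline{G}$, I would define the unit $\overline{\eta}_\autA\colon \autA \to \overline{G}\overline{F}(\autA)$ component-wise as the original unit $\eta_{\autA(\objectCenter)}$ on $\objectCenter$ and identities on $\objectLeft,\objectRight$. Naturality at the $a$-loops is immediate from naturality of $\eta$. Naturality at $\symbLeft$ reduces to $\eta_{\autA(\objectCenter)} \circ i = G(F(i)) \circ \eta_I$, another instance of naturality of $\eta$. Naturality at $\symbRight$ reduces to $G(f^*)\circ \eta_{\autA(\objectCenter)} = f$, which is the defining identity $(f^*)_* = f$ of the bijection~\eqref{eq:bij-lang}. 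The counit is defined symmetrically, and the triangle identities are inherited from those of $F\dashv G$ on the $\objectCenter$-component, while being trivial on the boundary components.

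There is no genuine obstacle, only a bookkeeping caveat: $\overline{F}$ and $\overline{G}$ do \emph{not} agree pointwise with $F$ and $G$ at the boundary objects $\objectLeft,\objectRight$. The whole role of the adjunction bijection~\eqref{eq:bij-lang} is to absorb this mismatch, so that both the lifted automata accept the correct languages and the units/counits can be taken to be identities at $\objectLeft$ and $\objectRight$ — which is precisely what is needed for them to live in $\catAutoLC$ and $\catAutoLD$ respectively.
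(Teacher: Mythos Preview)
Your proposal is correct and follows precisely the approach sketched in the paper: define $\overline{F}$ by applying $F$ to the state object, to $\symbLeft$, and to each $a$, while replacing $\autA(\symbRight)$ by its adjoint transpose $(\autA(\symbRight))^*$, and dually for $\overline{G}$. The paper's proof is only a two-line sketch giving this definition of $\overline{F}$; you have additionally supplied the verification that the lifted automata accept the right languages, the construction of the unit and counit, and the naturality and triangle-identity checks, all of which are routine and exactly as one would expect.
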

\begin{proof}[Proof sketch]
  The functor $\overline{F}$ maps an automaton
  $\autA\colon\catI\to\catC$ from $\catAutoLC$ to the
  $\catD$-automaton $\overline{F}\autA\colon \catI\to\catD$ mapping
  $\symbLeft\colon\objectLeft\to\objectCenter$ to $F\autA(\symbLeft)$,
  $a\colon\objectCenter\to\objectCenter$ to $F(\autA(a))$ and
  $\symbRight\colon\objectCenter\to\objectRight$ to the adjoint
  transpose $(\autA(\symbRight))^*$ of $\autA(\symbRight)$. The
  functor $\overline{G}$ is defined similarly.
\end{proof}

\section{Choffrut's minimization of subsequential transducers}
\label{sec:choffrut}

In~\cite{Choffrut79,Choffrut2003} Choffrut establishes a minimality
result for "subsequential transducers", which are deterministic
automata that output a word while processing their input.  In this
section, we show the existence of minimal subsequential transducers  using our functorial
framework.

We first present the model of "subsequential transducers" in
Section~\ref{subsection:subsequential}, show how these can be
identified with "automata in the Kleisli category of a suitably chosen
monad", and state the minimization result,
Theorem~\ref{theorem:minimization transducer}. The subsequent sections
provide the necessary material for proving the theorem.

\subsection{Subsequential transducers and automata in a Kleisli
  category}
\label{subsection:subsequential}

"Subsequential transducers" are (finite state) machines that compute
partial functions from input words in some alphabet~$\alphabetA$ to
output words in some other alphabet~$\alphabetB$.  In this section, we
recall the classical definition of these objects, and show how it can
be phrased categorically.

\begin{definition}
  A ""subsequential transducer"" is a tuple
  \begin{align*}
    T=(Q, \alphabetA, \alphabetB, q_0, t, u_0,(-\cdot
    a)_{a\in\alphabetA},(-*a)_{a\in\alphabetA})\ ,
  \end{align*}
  where
  \begin{itemize}
  \item $\alphabetA$ is the ""input alphabet@input alphabetSS"" and
    $\alphabetB$ the ""output one@output alphabetSS"",
  \item $Q$ is a (finite) set of ""states@statesSS"".
  \item $q_0$ is either undefined or belongs to~$Q$ and is called the
    ""initial state of the transducer"".
  \item $t\colon Q\rightharpoonup \alphabetB^*$ is a ""partial
    termination function"".
  \item $u_0\in \alphabetB^*$ is defined if and only if~$q_0$ is, and
    is the ""initialization value"".
  \item $-\cdot a\colon Q\rightharpoonup Q$ is the ""partial
    transition function for the letter~$a$"", for all $a\in A$.
  \item $-* a\colon Q\rightharpoonup \alphabetB^*$ is the ""partial
    production function for the letter~$a$"" for all $a\in
    \alphabetA$; it is required that $q* a$ be defined if and only if
    $(q\cdot a)$ is.
  \end{itemize}
  A "subsequential transducer" $T$ computes a partial function
  $\intro*\semTrans T\colon \alphabetA^*\rightharpoonup \alphabetB^*$
  defined as:
  \begin{align*}
    \semTrans T(a_1\dots a_n)&=u_0(q_0*a_1)(q_1*a_2)\dots
    (q_{n-1}*a_n)t(q_n)&\text{for all~$a_1\dots a_n\in \alphabetA^*$,}
  \end{align*}
  where for each $1\le i\le n$ either $q_i$ is undefined or belongs
  to~$Q$ and is given by $q_i=q_{i-1}\cdot a_i$.  Furthermore,
  $\semTrans T(a_1\dots a_n)$ is undefined when at least one of
  $q_0,\ldots,q_n$ or $t(q_n)$ is so.
\end{definition}

\AP These "subsequential transducers" are modeled in our framework as
"automata in the category" of free algebras for the
monad~$\monadTrans$, that we describe now.
\begin{definition}
  The monad $\intro*\monadTrans\colon\Set\to\Set$ is defined by
  \[
  \monadTrans(X)=B^*\times X + 1
  \]
  with unit $\eta_X$ and multiplication $\mu_X$ defined for all $x\in
  X$ and $w,u\in B^*$ as:
  \begin{align*}
    &&
    \mu_X\colon\qquad\monadTrans^2 X&\to \monadTrans X\\
    \eta_X\colon\quad X &\to B^*\times X +1&
    (w,(u,x))&\mapsto (wu, x)\\
    x&\mapsto (\varepsilon, x)&
    (w,\bot) &\mapsto \bot\\
    && \bot &\mapsto \bot
  \end{align*}
  where we denote by $\bot$ the unique element of $1$ (used to model
  the partiality of functions).
\end{definition}

\AP Recall that the category of free $\monadTrans$-algebras is the
""Kleisli category for $\monadTrans$"", $\intro*\KlTrans$, that has as
objects sets $X,Y, \ldots$ and as ""morphisms@Kleisli morphism""
$f\colon X\tokl Y$ functions $f\colon X\to B^*\times Y+1$ in $\Set$,
that is a partial function from $X$ to $B^*\times Y$.

\AP Let $T$ be a "subsequential transducer". The "initial state of the
transducer" $q_0$ and the "initialization@initialization value"
"value@initialization value" $u_0$ together form a morphism $i\colon
1\tokl Q$ in the category $\KlTrans$. Similarly, the "partial
transition@partial transition function" "function@partial transition
function" and the "partial production function" for a letter $a$ of
the input alphabet $A$ are naturally identified to "Kleisli morphisms"
$\delta_a\colon Q\tokl Q$ in $\KlTrans$. Finally, the "partial
termination function" together with the "partial production function"
are nothing but a "Kleisli morphism" of the form $t\colon Q\tokl
1$. To summarise, we obtained that a "subsequential transducer" $T$ in
the sense of~\cite{Choffrut2003} is specified by the following
"morphisms@Kleisli morphism" in $\KlTrans$:
\[
\begin{tikzcd}
  1\arrow[r,negated,"i"] & Q\arrow[loop,looseness=6,negated,swap,
  "\delta_a"]\arrow[r,negated,"t"] & 1
\end{tikzcd}
\]
that is by a functor $\autA_T\colon \catI\to\KlTrans$ or equivalently,
a "$(\KlTrans,1,1)$-automaton". The subsequential function realised by
the transducer $T$ is a partial function $A^*\rightharpoonup B^*$ and
is fully captured by the "$(\KlTrans,1,1)$-language" $\langL_T\colon
\catO\to\KlTrans$ accepted by $\autA_T$, which is obtained as
$\autA_T\circ\iota$. Indeed, this $\KlTrans$-language gives for each
word $w\in A^*$ a "Kleisli morphism" $\langL_T(\symbLeft
w\symbRight)\colon 1\tokl 1$, or equivalently, outputs for each word
in $A^*$ either a word in $B^*$ or the undefined element $\bot$.
 
Putting all this together, we can state the following lemma, which
validates the categorical encoding of "subsequential transducers":
\begin{lemma}%
  \intro[{$(\KlT , 1,1)$-automaton}]{}%
  \intro[$\KlTrans $-automaton]{}%
  "Subsequential transducers" are in one to one correspondence with
  "$(\KlTrans,1,1)$-automata", and partial maps from~$A^*$ to~$B^*$
  are in one to one correspondence with "$(\KlTrans,1,1)$-languages".
  Furthermore, the acceptance of languages is preserved under these
  bijections.
\end{lemma}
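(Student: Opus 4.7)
The plan is to construct the claimed bijections explicitly and then verify that the semantics matches under these translations. For the first bijection, given a subsequential transducer $T=(Q, A, B, q_0, t, u_0, (-\cdot a), (-*a))$, define the functor $\autA_T\colon\catI\to\KlTrans$ on objects by $\autA_T(\objectLeft)=\autA_T(\objectRight)=1$, $\autA_T(\objectCenter)=Q$, and on generating arrows by the three Kleisli morphisms already identified in the discussion preceding the lemma: the initial morphism $i\colon 1\tokl Q$ sending $\bot$ to $(u_0,q_0)$ when both are defined and to $\bot$ otherwise; the transition morphisms $\delta_a\colon Q\tokl Q$ sending $q$ to $(q*a,\,q\cdot a)$ when $q\cdot a$ is defined and to $\bot$ otherwise; and the termination morphism $t'\colon Q\tokl 1$ sending $q$ to $(t(q),\,\bot)$ when $t(q)$ is defined and to $\bot$ otherwise. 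Functoriality is automatic because $\catI$ is freely generated by these arrows modulo no relations beyond composition in the loop component, so $\autA_T$ is uniquely determined by the assignment on generators.

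The inverse direction is a direct unpacking: given an arbitrary $(\KlTrans,1,1)$-automaton $\autA$ with $\autA(\objectCenter)=Q$, the morphism $\autA(\symbLeft)\colon 1\to B^*\times Q+1$ decomposes into a pair $(u_0,q_0)\in B^*\times Q$ or $\bot$, each $\autA(a)\colon Q\to B^*\times Q+1$ decomposes pointwise into a partial transition and a partial production, and $\autA(\symbRight)\colon Q\to B^*\times 1+1\cong B^*+1$ yields the partial termination function. Thus one recovers a subsequential transducer, and the two assignments are mutually inverse by construction. The same data give the bijection on languages and partial maps at the level of $\catO$: a $(\KlTrans,1,1)$-language $\langL$ assigns to each $\symbLeft w\symbRight$ an element of $B^*\times 1+1\cong B^*+1$, which is exactly a partial function from $A^*$ to $B^*$.

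The remaining content is to verify that acceptance is preserved, i.e.\ that $\langL_{\autA_T}(\symbLeft a_1\cdots a_n\symbRight) = \semTrans{T}(a_1\cdots a_n)$ as elements of $B^*+1$. Unfolding, the left-hand side is the Kleisli composition $\autA_T(\symbRight)\circ \autA_T(a_n)\circ\cdots\circ \autA_T(a_1)\circ \autA_T(\symbLeft)$, which is computed by iterating the multiplication $\mu_X(w,(u,x))=(wu,x)$ and $\mu_X(w,\bot)=\bot$. A straightforward induction on $n$ shows that this composite evaluates to the concatenation $u_0(q_0*a_1)(q_1*a_2)\cdots(q_{n-1}*a_n)t(q_n)$ when all intermediate states $q_i = q_{i-1}\cdot a_i$ and terminal value are defined, and to $\bot$ otherwise, matching the transducer's defining equation for $\semTrans{T}$.

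The only technical point deserving care is the handling of partiality: one must check that $\mu$'s treatment of $\bot$ correctly propagates undefinedness through the Kleisli composition, so that whenever any factor produces $\bot$ the entire composite is $\bot$. This is immediate from the case analysis built into $\mu_X$, so it is a routine verification rather than a genuine obstacle; the only mild subtlety is the convention for the initial and terminal morphisms involving the object $1$, where one identifies $B^*\times 1+1$ with $B^*+1$ and checks that this identification intertwines Kleisli composition with the concatenation of output words in the transducer's semantics.
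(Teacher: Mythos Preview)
Your proposal is correct and follows essentially the same approach as the paper. The paper does not give a separate proof for this lemma; the statement is presented as a summary of the discussion immediately preceding it, which identifies the transducer data $(q_0,u_0)$, $(-\cdot a,\,-*a)$, and $t$ with the Kleisli morphisms $i$, $\delta_a$, and $t$ exactly as you do, and observes that the computed partial function coincides with the language. You have simply made explicit the inverse construction and the inductive unfolding of Kleisli composition that the paper leaves implicit.
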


In the rest of this section we will see how to obtain Choffrut's
minimization result as an application of Lemma~\ref{lemma:minimal}.
I.e., we have to provide in the category of
"$(\KlTrans,1,1)$-automata",
\begin{enumerate}
\item an "initial object@initial transducer",
\item a "final object@final transducer", and,
\item a "factorization system@factorization transducer".
\end{enumerate}

The existence of the "initial transducer" is addressed in
Section~\ref{subsection:initial transducer}, the one of the "final
transducer" is the subject of Section~\ref{subsection:final
  transducer}.  In Section~\ref{subsection:factorization transducers}
we show how to construct a "factorization system@factorization
transducer".  Together, we obtain:
\begin{theorem}[Categorical version of~\cite{Choffrut79,Choffrut2003}]
  \label{theorem:minimization transducer}
  For every~"$(\KlTrans,1,1)$-language", there exists a minimal
  "$(\KlTrans,1,1)$-automaton" for it.
\end{theorem}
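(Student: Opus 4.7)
The plan is to verify the three hypotheses of Lemma~\ref{lemma:minimal} for the category $\catAutoLKlT$ of $(\KlTrans,1,1)$-automata accepting a fixed $(\KlTrans,1,1)$-language $\langL$: an initial object, a final object, and a factorization system. Corollary~\ref{cor:exist-Kan-suff-cond} does not apply directly because $\KlTrans$ lacks countable products, so each of the three ingredients must be constructed by hand using the concrete structure of partial-function-with-output morphisms.

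For the initial transducer I expect a routine computation. The state space is the copower $\coprod_{u\in A^*}1$, which in $\KlTrans$ coincides with the set $A^*$ since countable coproducts in $\KlTrans$ are lifted from $\Set$. The Kleisli structure is then forced: initialization $i\colon 1\tokl A^*$ sends the unique point to $(\varepsilon,\varepsilon)$, each $\delta_a\colon A^*\tokl A^*$ sends $w$ to $(\varepsilon,wa)$, and the termination $t\colon A^*\tokl 1$ is $w\mapsto \langL(\symbLeft w\symbRight)$. The language computation reduces to telescoping a Kleisli composition, and universality is then a standard induction on word length, essentially reproducing the pointwise colimit formula for a left Kan extension.

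The main obstacle is the final transducer, where the failure of completeness in $\KlTrans$ really bites: a naive ``product of residuals'' does not live in $\KlTrans$, so one must identify which residuals can serve as states of a genuine subsequential transducer. The right notion, following Choffrut, is the reduced residual $\red(u^{-1}\langL)$ obtained by pulling out the longest common prefix $\lcp$ of the outputs of the future-behaviour partial function $u^{-1}\langL$; the state space is then the set of irreducible residuals reachable from $\langL$. The transition on $a$ commits to the prefix that becomes forced when restricting by $a$, and termination reads off the value at the empty word. Proving finality then amounts to showing that any transducer $\autA$ accepting $\langL$ admits a unique Kleisli morphism into this object: existence follows by sending each state of $\autA$ to the reduced residual of the subsequential function it computes, and uniqueness follows because the irreducibility constraint pins down both the underlying map on states and the accompanying output prefixes.

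For the factorization system I plan to lift one from $\KlTrans$ to $\catAutoLKlT$ pointwise, exactly as in Lemma~\ref{lem:lifting-fact-syst}. A Kleisli morphism $f\colon X\tokl Y$ is a partial function $X\rightharpoonup B^*\times Y$, which factors canonically through its image as a surjective-on-domain partial function followed by an injective one, giving candidate classes $(\EpiKlT,\MonoKlT)$ on $\KlTrans$. Checking the diagonal fill-in in $\KlTrans$ is direct, and the pointwise extension to $\catAutoLKlT$ yields $(\EpiAut,\MonoAut)$ as required. With the three ingredients in place, Lemma~\ref{lemma:minimal} produces the minimal $(\KlTrans,1,1)$-automaton as the $(\EpiAut,\MonoAut)$-factorization of the unique morphism from the initial to the final transducer, which is precisely Choffrut's canonical minimal subsequential transducer.
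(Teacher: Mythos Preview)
Your overall strategy matches the paper's: verify an initial object, a final object, and a factorization system in $\catAutoLKlT$, then invoke Lemma~\ref{lemma:minimal}. The initial transducer you describe is correct and agrees with the paper (which obtains it by applying the left adjoint $\FreeAut$ to the initial $\Set$-automaton rather than by direct verification, but the resulting object is the same).

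There is, however, a genuine gap in your final transducer. You take as state space ``the set of irreducible residuals reachable from $\langL$'', i.e.\ the functions $\red(u^{-1}\langL)$ for $u\in A^*$. This object is \emph{not} final in $\catAutoLKlT$: an automaton $\autA$ accepting $\langL$ may have unreachable states, and the future behaviour computed from such a state need not be a residual of $\langL$ at all, so your map ``send each state to the reduced residual of the function it computes'' does not land in your state space. The paper's final automaton has state space $\IrrAB$, the set of \emph{all} irreducible partial functions $A^*\rightharpoonup B^*$. Finality is then proved not by a direct universal-property check but by exhibiting the bijection $\monadTrans(\IrrAB)\cong (B^*+1)^{A^*}$ and showing that $\AutUTrans$ reflects final objects (Lemma~\ref{lem:UAut-reflects-final-obj}), so finality is inherited from the easy final automaton in $\catAutoLSet$. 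Your direct approach can be repaired by enlarging the state space to all of $\IrrAB$ (with states whose future is $\botfun$ sent to $\bot$); but restricted to residuals of $\langL$ you have described something close to the minimal automaton itself, not the final one, and the appeal to Lemma~\ref{lemma:minimal} then collapses.

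Your factorization system is also underspecified. ``Factor through the image as a surjective partial function followed by an injective one'' does not say where the $B^*$-component goes, and the naive reading (image inside $B^*\times Y$) does not yield the paper's system. The paper's $(\EpiKlT,\MonoKlT)$ pushes \emph{all} output into the epi: $\MonoKlT$ consists of Kleisli morphisms $m$ with $\pi_2(m)$ injective \emph{and} $\pi_1(m)$ constantly $\varepsilon$, while $\EpiKlT$ consists of those with $\pi_2$ surjective. The paper notes that other factorization systems on $\KlTrans$ exist, but this particular one is what makes the factorization of $\AutInit(\langLKlT)\to\AutFin(\langLKlT)$ coincide with Choffrut's canonical transducer.
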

Let us note that only the existence of the automaton is mentioned in
this statement, and the way to compute it effectively is not addressed
as opposed to Choffrut's work.  Nevertheless,
Lemma~\ref{lemma:minimal} describes what are the basic functions that
have to be implemented, namely $\Reach$ and $\Obs$.

\AP The rest of this section is devoted to establishing the three
above mentioned points.  Unfortunately, as it is usually the case with
Kleisli categories, $\KlTrans$ is neither complete, nor cocomplete. It
does not even have binary products, let alone countable powers.  Also,
the existence of a "factorization system" does not generally hold in
Kleisli categories. Hence, providing the above three pieces of
information requires a bit of work.

In the next section we present an adjunction between categories of
"$(\KlTrans,1,1)$-automata" and "$(\Set,1,\alphabetB^*)$-automata"
which is then used in the subsequent ones for proving the existence of
"initial@initial transducer" and "final automata@final transducer". We
finish the proof with a presentation of the "factorization
systems@factorization transducer".

\subsection{Back and forth to automata in $\Set$}
\label{subsection:transducer adjunction}

In order to understand what are the properties of the category of
"$(\KlTrans,1,1)$-automata", an important tool will be the ability to
see alternatively a "subsequential transducer" as an "automaton in
$\KlTrans$" as described above, or as an "automaton in $\Set$",
since $\Set$ is much better behaved than $\KlTrans$.  These two points
of view are related through an adjunction, making use of the results
of Section~\ref{sec:lifting-adjucntions-to-automata} and
Lemma~\ref{lem:lifting-fact-syst}.

\AP Indeed, we start from the well known adjunction between $\Set$ and
$\KlTrans$:
\begin{equation}
  \label{eq:adj-kleisli}
  \begin{tikzcd}
    \Set \ar[rr,bend left, "\FreeTrans"] & \bot &\KlTrans\ar[ll,bend
    left, "\UTrans"]\,.
  \end{tikzcd}
\end{equation}
We recall that the free functor $\intro*\FreeTrans$ is defined as the
identity on objects, while for any function $f\colon X\to Y$ the
"morphism@Kleisli morphism" $\FreeTrans f\colon X\tokl Y$ is defined
as $\eta_Y\circ f\colon X\to \monadTrans Y$. For the other direction,
the functor $\intro*\UTrans$ maps an object $X$ in $\KlTrans$ to
$\monadTrans X$ and a morphism $f\colon X\tokl Y$ (which is seen here
as a function $f\colon X\to \monadTrans Y$) to $\mu_Y\circ \monadTrans
f\colon \monadTrans X\to \monadTrans Y$.

\AP\intro[$(\Set , 1,B^*+1)$-automaton]{}%
A simple, yet important observation is that the language of interest,
which is a partial function $L\colon A^*\rightharpoonup B^*$ can be
modeled either as a "$(\KlT, 1,1)$-language" $\langLKlT$, or
equivalently, as a "$(\Set, 1,B^*+1)$-language" $\langLSet$. This is
because for each $w\in A^*$ we can identify $L(w)$ either with an
element of $\KlTrans(1,1)$ or, equivalently, as an element of
$\Set(1,B^*+1)$.
\begin{align*}
  \langLKlT\colon\qquad\catO&\to\KlTrans &     \langLSet\colon\qquad\catO&\to\Set \\
  \objectLeft&\mapsto 1 &     \objectLeft&\mapsto 1 \\
  \objectRight&\mapsto 1 &     \objectRight&\mapsto B^*+1 \\
  \symbLeft w \symbRight &\mapsto L(w)\colon 1\tokl 1 & \symbLeft w
  \symbRight &\mapsto L(w)\colon 1\to B^*+1
\end{align*}
To see how this fits in the scope of
Section~\ref{sec:lifting-adjucntions-to-automata}, notice that
$\langLKlT$ is a "$(\KlT, \FreeTrans 1,1)$-language", while
$\langLSet$ is a "$(\Set, 1,\UTrans 1)$-language" and they correspond
to each other via the
bijections described in~\eqref{eq:bij-lang}.\\
\AP
\begin{minipage}{9.8cm}
  Applying Lemma~\ref{lem:lifting-adjunctions} for the Kleisli
  adjunction~\eqref{eq:adj-kleisli} we obtain an adjunction
  $\FreeAut\dashv \AutUTrans$ between the categories of
  "$\KlTrans$-automata for $\langLKlT$" and of "$\Set$-automata
  accepting $\langLSet$", as depicted in the picture on the right.
  The functor $\AutUTrans$ sends a "$(\KlT, 1,1)$-automaton" with state
  object $Q$ to a "$(\Set, 1,B^*+1)$-automaton" with state object $\monadTrans
  Q$, while $\FreeAut$ sends a "$(\Set, 1,B^*+1)$-automaton" with state object
  $Q'$ to a "$(\KlT, 1,1)$-automaton" with same state object $Q'$.

  We will make heavy use of this correspondence in what follows.
\end{minipage}
~~~$
\begin{tikzcd}[row sep={0.6cm,between origins},column sep=small]
  \catAutoLSet\ar[ddd,"\St"] \ar[rr,bend left=15, "\FreeAut"] & \bot &\catAutoLKlT\ar[ll,bend left=15, "\AutUTrans"]\ar[ddd,"\St"]\\
  & & \\
  & & \\
  \Set \ar[rr,bend left=17, "\FreeTrans"] & \bot &\KlT\ar[ll,bend
  left=17, "\UTrans"]\,.
\end{tikzcd}
$

\subsection{The initial $\KlTrans$-automaton for the language
  $\langLKlT$}
\label{subsection:initial transducer}

\intro[initialSS]{}\intro[initial transducer]{}%
The functor $\FreeAut$ is a left adjoint and consequently preserves
colimits and in particular the "initial object". We thus obtain that
the initial $\langL_{\KlT}$-automaton is
$\FreeAut(\AutInit(\langLSet))$, where $\AutInit(\langLSet)$ is the
"initial object" of $\catAutoLSet$.  This automaton can be obtained by
Lemma~\ref{lem:the-minimization-wheel} as the functor
$\AutInit(\langLSet)\colon\catI\to\Set$ specified by
$\AutInit(\langLSet)(\objectCenter)=A^*$ and for all~$a\in\alphabetA$
\begin{align*}
  \AutInit(\langLSet)(\symbLeft)\colon 1&\to A^* &
  \AutInit(\langLSet)(\symbRight)\colon A^*&\to B^*+1 &
  \AutInit(\langLSet)(a)\colon  A^*&\to A^* \\
  0&\mapsto \varepsilon& w&\mapsto L(w)& w&\mapsto wa
\end{align*}
\AP Hence, by computing the image of $\AutInit(\langLSet)$ under
$\FreeAut$, we obtain the following description of the "initial
$\KlTrans$-automaton@initial automaton" $\AutInit(\langLKlT)$
"accepting" $\langLKlT$: $\AutInit(\langLKlT)(\objectCenter)=A^*$ and
for all $a\in\alphabetA$
\begin{align*}
  \hspace{-5pt}
  \AutInit(\langLKlT)(\symbLeft)\colon 1&\tokl A^* &
  \hspace{-5pt}
  \AutInit(\langLKlT)(\symbRight)\colon A^*&\tokl 1&
  \hspace{-5pt}
  \AutInit(\langLKlT)(a)\colon  A^*&\tokl A^*\\
  0&\mapsto (\varepsilon,\varepsilon)& w&\mapsto L(w)& w&\mapsto
  (\varepsilon,wa)
\end{align*}

\subsection{The final $\KlTrans$-automaton for the language
  $\langLKlT$}
\label{subsection:final transducer}%

\intro[final transducer]{}%
The case of the final $\KlTrans$-automaton is more complicated, since
it is not constructed as easily. However, assuming the final automaton
exists, it has to be sent by $\AutUTrans$ to a final
"$\Set$-automaton". Moreover, by
Lemma~\ref{lem:UAut-reflects-final-obj}, in order to prove that a
given $\KlTrans$-automaton $\autA$ is a final object of $\catAutoLKlT$
it suffices to show that $\AutUTrans(\autA)$ is the final object in
$\catAutoLSet$. The proof of the following lemma generalises the fact
that $\UTrans$ reflects final objects and can be proved in the same
spirit.
\begin{lemma}
  \label{lem:UAut-reflects-final-obj}
  The functor $\AutUTrans\colon \catAutoLKlT\to\catAutoLSet$ reflects
  final objects.
\end{lemma}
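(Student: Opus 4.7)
The plan is to mimic the classical argument that $\UTrans\colon\KlTrans\to\Set$ reflects final objects, transported to the functor category of automata. Assume $\AutUTrans(\autA)$ is final in $\catAutoLSet$ and let $\autB\in\catAutoLKlT$ be arbitrary, with state objects $Q=\autB(\objectCenter)$ and $Q'=\autA(\objectCenter)$; I want to produce a unique morphism $\autB\to\autA$ in $\catAutoLKlT$.

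For uniqueness, I would first show that $\AutUTrans$ is faithful. A morphism in $\catAutoLKlT(\autB,\autA)$ is determined by its state component, a Kleisli morphism $h\colon Q\to TQ'$, whose $\AutUTrans$-image has state component $\mu\circ Th\colon TQ\to TQ'$. Using the monad laws and naturality of $\eta$, one checks that $(\mu\circ Th)\circ\eta_Q = h$, so $h$ is recoverable from $\mu\circ Th$. Two morphisms $\autB\to\autA$ with the same $\AutUTrans$-image therefore coincide, and by finality of $\AutUTrans(\autA)$ this image is uniquely determined.

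For existence, let $g\colon\AutUTrans(\autB)\to\AutUTrans(\autA)$ be the unique morphism furnished by finality, and set $h := g\circ\eta_Q$. Then $h$ is the state component of a morphism $\autB\to\autA$ in $\catAutoLKlT$ exactly when $g = \mu\circ Th$, equivalently when $g$ is a $T$-algebra morphism from $(TQ,\mu)$ to $(TQ',\mu)$. Proving this $T$-algebra property is the main obstacle. I would argue as follows: every morphism in $\catAutoLSet$ preserves the behaviour of each state (the function $A^*\to B^*+1$ obtained by running the automaton from that state as initial state), and in the final object of $\catAutoLSet$ two states with the same behaviour coincide, since by Lemma~\ref{lem:the-minimization-wheel} its state space is $(B^*+1)^{A^*}$, identifying states with their behaviours. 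A short calculation with $\monadTrans X=B^*\times X+1$ shows that in $\AutUTrans(\autB)$ the behaviour of $(u,q)\in TQ$ is obtained by prepending $u$ to the behaviour of $(\varepsilon,q)$ (with $\bot$ absorbing), and the state $\bot\in TQ$ has the constant $\bot$ behaviour; the same equivariance holds in $\AutUTrans(\autA)$. Applying behaviour-preservation of $g$, the states $g(u,q)$ and $u\cdot g(\varepsilon,q)$ share the same behaviour in $\AutUTrans(\autA)$ and hence coincide, and $g(\bot)=\bot$. These identities are precisely the $T$-algebra morphism conditions for $g$, so $g = \mu\circ Th$, and $h$ is the required morphism in $\catAutoLKlT$.
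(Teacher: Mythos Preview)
Your proof is correct and follows the outline the paper suggests (the paper gives no detailed proof, only the remark that one ``generalises the fact that $\UTrans$ reflects final objects and can be proved in the same spirit''). Your faithfulness argument for uniqueness is standard, and your existence argument---reducing to the claim that the unique $g\colon\AutUTrans(\autB)\to\AutUTrans(\autA)$ in $\catAutoLSet$ is a $T$-algebra map---is exactly the right reduction.

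Where your execution differs from the most likely intended one is in \emph{how} you establish the $T$-algebra property of $g$: you appeal to the concrete shape of $\monadTrans$ and to the explicit description of $\AutFin(\langLSet)$ from Lemma~\ref{lem:the-minimization-wheel}, arguing via behaviours. A monad-agnostic alternative is to use finality of $\AutUTrans(\autA)$ once more. The counit $\epsilon$ of the lifted adjunction $\FreeAut\dashv\AutUTrans$ satisfies that $\AutUTrans(\epsilon_{\autB})$ has state component $\mu_Q$ (and likewise for $\autA$); then both $g\circ\AutUTrans(\epsilon_{\autB})$ and $\AutUTrans(\epsilon_{\autA})\circ\AutUTrans\FreeAut(g)$ are morphisms $\AutUTrans\FreeAut\AutUTrans(\autB)\to\AutUTrans(\autA)$ in $\catAutoLSet$, hence equal by finality, and their state components are $g\circ\mu_Q$ and $\mu_{Q'}\circ Tg$ respectively. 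Your concrete route is perfectly valid and arguably more transparent for this particular monad; the abstract route shows that the lemma holds whenever a Kleisli adjunction is lifted as in Lemma~\ref{lem:lifting-adjunctions}. One small omission worth adding: after obtaining $g=\UTrans(h)$ you should note that $h$ then commutes with the structure maps of $\autB$ and $\autA$ in $\KlTrans$; this follows from faithfulness of $\UTrans$ together with the corresponding commutation for $g$ in $\Set$, but it deserves a sentence.
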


The final object in $\catAutoLSet$ is the automaton
$\AutFin(\langLSet)$ as described using
Lemma~\ref{lem:the-minimization-wheel}.  The functor
$\AutFin(\langLSet)\colon\catI\to\Set$ specified by
\begin{align*}
  &\AutFin(\langLSet)(\objectCenter)=(B^*+1)^{A^*}&
  &\begin{array}{rl}
    \AutFin(\langLSet)(\symbRight)\colon  (B^*+1)^{A^*}&\to B^*+1\\
    K&\mapsto K(\varepsilon)
  \end{array}\\
  &\begin{array}[c]{rl}
    \AutFin(\langLSet)(\symbLeft)\colon 1&\to (B^*+1)^{A^*}\\
    0&\mapsto L
  \end{array}&
  &\begin{array}{rl}
    \AutFin(\langLSet)(a)\colon (B^*+1)^{A^*}&\to (B^*+1)^{A^*}\\
    K&\mapsto \lambda w. K(aw)
  \end{array}
\end{align*}

To describe the set of states of the final automaton in $\catAutoLKlT$
we need to introduce a few notations. Essentially we are looking for a
set of states $Q$ so that $B^*\times Q+1$ is isomorphic to
$(B^*+1)^{A^*}$. The intuitive idea is to decompose each function in
$K\in (B^*+1)^{A^*}$ (\AP except for the one which is nowhere defined,
that is the function $\intro*\botfun=\lambda w.\bot$) into a word in
$B^*$, the common prefix of all the $B^*$-words in the image of $K$,
and an "irreducible function".

\AP For~$v\in B^*$ and a function $K\neq \botfun$ in $(B^*+1)^{A^*}$,
denote by $v\intro*\pstar K$ the function defined for
all~$u\in\alphabetA^*$ by $ (v\pstar K)(u)=v\,K(u)$ if $K(u)\in B^*$
and $(v\pstar K)(u)=\bot$ otherwise.
\AP Define also the ""longest common prefix"" of $K$, $\intro*\lcpK\in
B^*$, as the longest word that is prefix of all~$K(u)\neq\bot$ for~$u$
in~$A^*$ (this is well defined since~$K\neq\botfun$).  The ""reduction
of~$K$"", $\intro*\red(K)$, is defined as:
\begin{align*}
  \red(K)(u)&= \begin{cases}
    v&\text{if}~K(u)=\lcp(K)\,v,\\
    \bot&\text{otherwise}.
  \end{cases}
\end{align*}
\AP Finally, $K$ is called ""irreducible"" if $\lcp(K)=\varepsilon$
(or equivalently if~$K=\red(K)$).  We denote by $\intro*\IrrAB$ the
"irreducible" functions in~$(B^*+1)^{A^*}$.

\AP What we have constructed is a bijection between
\begin{align*}
  \monadTrans(\IrrAB)=B^*\times\IrrAB +1&&\text{and}&&(B^*+1)^{A^*}\ ,
\end{align*}
that is defined as
\begin{align}
  \label{eq:1}
  \begin{array}{rl}
    \varphi\colon B^*\times\IrrAB +1 & \to (B^*+1)^{A^*} \\ 
    (u,K)&\mapsto u\pstar K\\
    \bot&\mapsto \botfun\ ,
  \end{array}
\end{align}
and the converse of which maps every $K\neq\botfun$ to
$(\lcpK,\redK)$, and $\botfun$ to~$\bot$.

Given $a\in A$ and $K\in (B^*+1)^{A^*}$ we denote by $a^{-1}K$ the
function in $(B^*+1)^{A^*}$ that maps $w\in A^*$ to $K(aw)$.

We can now define a functor $\AutFin(\langLKlT)\colon\catI\to
\KlTrans$ by setting
\begin{align*}
  &\AutFin(\langLKlT)(\objectLeft)=1&
  \AutFin(\langLKlT)(\objectCenter)=\IrrAB & &
  \AutFin(\langLKlT)(\objectRight)=1
\end{align*}
and defining $\AutFin(\langLKlT)$ on arrows as follows
\begin{align*}
  \AutFin(\langLKlT)(\symbLeft)&\colon 1\tokl \IrrAB  & 0&\mapsto (\lcpL, \redL)\\
  \AutFin(\langLKlT)(\symbRight)&\colon  \IrrAB \tokl 1 & K&\mapsto K(\varepsilon)\\
  \AutFin(\langLKlT)(a)&\colon \IrrAB \tokl \IrrAB &  K&\mapsto \botfun \qquad\qquad\qquad\qquad\text{if } a^{-1}K=\botfun\\
  &&K&\mapsto
  (\lcp(a^{-1}K),\red(a^{-1}K))\quad\text{otherwise}.
\end{align*}

\begin{lemma}
  The "$\KlTrans$-automaton" $\AutFin(\langLKlT)$ is a final object in
  $\catAutoLKlT$.
\end{lemma}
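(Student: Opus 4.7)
The plan is to invoke Lemma~\ref{lem:UAut-reflects-final-obj}: since $\AutUTrans$ reflects final objects, it suffices to show that $\AutUTrans(\AutFin(\langLKlT))$ is isomorphic in $\catAutoLSet$ to the final $\Set$-automaton $\AutFin(\langLSet)$ described just above the statement.

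By definition of $\AutUTrans$, the state object of $\AutUTrans(\AutFin(\langLKlT))$ is $\monadTrans(\IrrAB) = B^* \times \IrrAB + 1$, which the bijection $\varphi$ of~\eqref{eq:1} identifies with $(B^*+1)^{A^*}$. The remaining work is to verify that $\varphi$, together with the identities on the input object $1$ and the output object $B^*+1$, is a natural isomorphism between the two automata, i.e.\ that it commutes with the images of $\symbLeft$, $\symbRight$, and each letter $a \in A$. The first two checks are immediate. For $\symbLeft$: $\varphi(\lcpL, \redL) = \lcpL \pstar \redL = L = \AutFin(\langLSet)(\symbLeft)(0)$. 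For $\symbRight$: unfolding the multiplication of $\monadTrans$ shows that $\AutUTrans(\AutFin(\langLKlT))(\symbRight)$ sends $(u,K)$ to $u \cdot K(\varepsilon)$ if $K(\varepsilon) \in B^*$, and to $\bot$ otherwise, which equals $(u \pstar K)(\varepsilon) = \AutFin(\langLSet)(\symbRight)(\varphi(u,K))$.

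The only step that requires real care is compatibility with a transition $a$. Unfolding $\AutUTrans$, one finds that $(u,K) \in B^* \times \IrrAB$ is sent to $(u \cdot \lcp(a^{-1}K), \red(a^{-1}K))$ when $a^{-1}K \neq \botfun$, and to $\bot$ otherwise. On the other side, $\AutFin(\langLSet)(a)(\varphi(u,K)) = \lambda w.\, (u \pstar K)(aw) = u \pstar (a^{-1}K)$. Using the canonical decomposition $a^{-1}K = \lcp(a^{-1}K) \pstar \red(a^{-1}K)$ (valid as soon as $a^{-1}K \neq \botfun$) together with the elementary associativity $u \pstar (v \pstar K') = (uv) \pstar K'$ of $\pstar$, both expressions agree after applying $\varphi$; the degenerate case is handled by $\varphi(\bot) = \botfun$. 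The hard part of the proof is thus precisely this transition compatibility, which is essentially just the observation that the splitting of a function of $(B^*+1)^{A^*}$ into a maximal common prefix and an irreducible remainder interacts well with left derivatives.
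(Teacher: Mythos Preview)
Your proof is correct and follows exactly the paper's approach: invoke Lemma~\ref{lem:UAut-reflects-final-obj} and verify that $\varphi$ from~\eqref{eq:1} exhibits $\AutUTrans(\AutFin(\langLKlT))$ as isomorphic to $\AutFin(\langLSet)$. The paper's proof merely asserts that ``one can check'' the compatibility on arrows, whereas you have actually carried out these checks in detail; in particular your treatment of the transition case via the identity $a^{-1}(u \pstar K) = u \pstar (a^{-1}K)$ and the decomposition $a^{-1}K = \lcp(a^{-1}K) \pstar \red(a^{-1}K)$ is exactly what is needed.
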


\begin{proof}
  We show that $\AutUTrans(\AutFin(\langLKlT))$ is isomorphic to the
  final automaton $\AutFin(\langLSet)$. Indeed, at the level of the
  state objects the bijection between
  $\AutUTrans(\AutFin(\langLKlT))(\objectCenter)$ and
  $\AutFin(\langLSet)(\objectCenter)$ is given by the function
  $\varphi$ defined in~\eqref{eq:1}. One can check that on arrows
  $\AutUTrans(\AutFin(\langLKlT))$ is the same as $\AutFin(\langLSet)$
  up to the correspondence given by $\varphi$.
\end{proof}

\subsection{A factorization system on $\catAutoLKlT$}
\label{subsection:factorization transducers}

\intro[factorization transducer]{}%
The factorization system on $\catAutoLKlT$ is obtained using
Lemma~\ref{lem:lifting-fact-syst} from a factorization system on
$\KlTrans$. There are several non-trivial factorization systems on
$\KlTrans$, one of which is obtained from the regular epi-mono
factorization system on $\Set$, or equivalently, from the regular
epi-mono factorization system on the category of Eilenberg-Moore
algebras for~$\monadTrans$. Notice that this is a specific result for
the monad $\monadTrans$ since in general, there is no reason that the
Eilenberg-Moore algebra obtained by factorizing a morphism between
free algebras be free itself. Nevertheless, in order to capture
precisely the syntactic transducer defined by
Choffrut~\cite{Choffrut79,Choffrut2003}, we will provide yet another
factorization system $(\EpiKlT,\MonoKlT)$, which we define concretely
as follows.  Given a morphism $f\colon X\tokl Y$ in $\KlTrans$ we
write $\pi_1(f)\colon X\to B^*+\{\bot\}$ and $\pi_2(f)\colon X\to
Y+\{\bot\}$ for the `projections' of $f$, defined by
\begin{align*}
\pi_1(f)(x)&=
\begin{cases}
u & \text{if }f(x)=(u,y)\,,\\
\bot & \text{otherwise,}
\end{cases}
&
\text{and}\qquad 
\pi_2(f)(x)&=
\begin{cases}
y & \text{if }f(x)=(u,y)\,,\\
\bot & \text{otherwise.}
\end{cases}
\end{align*}
We say that a partial function $g\colon X\to Y+\{\bot\}$ is surjective
when for every $y\in Y$ there exists $x\in X$ so that $g(x)=y$.

\AP The class $\intro{\EpiKlT}$ consists of all the morphisms of the
form $e\colon X\tokl Y$ such that $\pi_2(e)$ is surjective and the
class $\intro{\MonoKlT}$ consists of all the morphisms of the form
$m\colon X\tokl Y$ such that $\pi_2(m)$ is injective and $\pi_1(m)$ is
the constant function mapping every $x\in X$ to $\varepsilon$.

\begin{lemma}
  $(\EpiKlT, \MonoKlT)$ is a factorization system on $\KlTrans$.
\end{lemma}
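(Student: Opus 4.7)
My plan is to verify in turn the three conditions of a factorization system: both classes contain all isomorphisms and are closed under composition, every morphism admits an $(\EpiKlT,\MonoKlT)$-factorization, and the diagonal fill-in holds.

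First I will characterise the isomorphisms of $\KlTrans$. A Kleisli morphism $f\colon X\tokl Y$ is invertible iff $\pi_1(f)$ is the constant function $\varepsilon$ and $\pi_2(f)$ is a (total) bijection; this is a short calculation from $f^{-1}\bullet f=\eta_X$ using that $\mu_Y(u,(u',y))=(uu',y)$ forces $uu'=\varepsilon$. In particular every iso lies in $\EpiKlT\cap\MonoKlT$. Closure under Kleisli composition is equally concrete: if $e_2\bullet e_1$ is computed at $x$ via $\mu(\pi_1(e_1)(x),e_2(\pi_2(e_1)(x)))$, then surjectivity of the second projections propagates by a two-step preimage argument, and a product of two partial injections with $\varepsilon$-output is again one.

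For the factorization, given $f\colon X\tokl Y$ I let $Z=\pi_2(f)(X)\setminus\{\bot\}\subseteq Y$, define $e\colon X\tokl Z$ by copying $f$ (valid since the second projection lands in $Z$), and define $m\colon Z\tokl Y$ by $m(z)=(\varepsilon,z)$. Then $\pi_2(e)$ is surjective by construction of $Z$, $\pi_2(m)$ is the injection $Z\hookrightarrow Y$, and $\pi_1(m)\equiv\varepsilon$; a direct computation using $\mu_Y(u,(\varepsilon,z))=(u,z)$ shows $m\bullet e=f$.

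The main obstacle is the diagonal fill-in: given a commutative square with $e\in\EpiKlT$ on top and $m\in\MonoKlT$ on the right, construct a unique $d\colon Y\tokl Z$ closing both triangles. Write $m(z)=(\varepsilon,m_2(z))$ with $m_2$ a total injection. For $y\in Y$, pick any $x\in X$ with $\pi_2(e)(x)=y$, so $e(x)=(u,y)$. The commutativity $h\bullet e=m\bullet g$ computed at $x$ yields: if $h(y)=\bot$ then necessarily $g(x)=\bot$, and otherwise if $h(y)=(v,w)$ then $g(x)=(uv,z)$ with $m_2(z)=w$, which in particular witnesses $w\in\operatorname{Im}(m_2)$. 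This suggests the definition $d(y)=\bot$ when $h(y)=\bot$, and $d(y)=(v,m_2^{-1}(w))$ when $h(y)=(v,w)$; well-definedness uses the injectivity of $m_2$ and the above witness, and independence of the chosen $x$ follows because the formula only depends on $y$.

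It remains to check that this $d$ satisfies both $m\bullet d=h$ (immediate by unfolding, since $m$ adds no $B^*$-content) and $d\bullet e=g$ (a case split over whether $e(x)=\bot$ and whether $h(\pi_2(e)(x))=\bot$, each case matching the value of $g(x)$ read off from the commutative square). Uniqueness follows because any such $d$ is forced at every $y\in Y$ by the two triangle identities: the value $h(y)$ determines the $B^*$-part of $d(y)$, and injectivity of $m_2$ determines the second component. This establishes the diagonal property and completes the verification that $(\EpiKlT,\MonoKlT)$ is a factorization system on $\KlTrans$.
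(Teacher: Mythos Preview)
Your proof is correct and follows essentially the same route as the paper: the factorization through the image $Z=\{y\in Y\mid \exists x,u.\ f(x)=(u,y)\}$ is identical, and the diagonal fill-in is the same construction up to presentation. Your formulation $d(y)=(v,m_2^{-1}(w))$ read directly from $h(y)=(v,w)$ is slightly slicker than the paper's, which instead sets $d(y)=(v,z)$ using the second component of $g(x)$ for a chosen preimage $x$ and then verifies independence of that choice; both arguments exploit the totality and injectivity of $\pi_2(m)$ (so your phrase ``partial injections'' is a small slip---maps in $\MonoKlT$ are in fact total), and you additionally spell out closure under composition and uniqueness, which the paper leaves implicit.
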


\begin{proof}
  Notice that $f$ is an isomorphism in $\KlTrans$ if and only if
  $f\in\EpiKlT\cap\MonoKlT$.

  If $f\colon X\tokl Y$ is a morphism in $\KlTrans$ then we can define
  \[Z=\{y\in Y\mid \exists x\in X\ldotp\exists u\in B^*\ldotp  
  f(x)=(u,y)\}\,.\] We define $e\colon X\tokl Z$ by $e(x)=f(x)$ and
  $m\colon Z\tokl Y$ by $m(y)=(\epsilon,y)$. One can easily check that
  $f=m\circ e$ in $\KlTrans$.

  Lastly, we can show that the "diagonal property" holds. Assume we
  have a commuting square in $\KlTrans$.
\[
\begin{tikzcd}
  X\ar[r, two heads, "e"]\ar[d,swap, "f"] & Y\ar[d, "g"]\ar[dl, dashed,"d"] \\
  Z\ar[r, tail,swap, "m"] & W
\end{tikzcd}
\]
We will prove the existence of $d\colon Y\tokl Z$ so that $d\circ e=f$
and $m\circ d=g$. Assume $y\in Y$. If $g(y)=\bot$ we set
$d(y)=\bot$. Otherwise assume $g(y)=(v,t)$, for some $v\in B^*$ and
$t\in W$. Since $e\in \EpiKlT$, there exists $u\in B^*$ and $x\in X$
so that $e(x)=(u,y)$. Assume $f(x)=(w,z)$ for some $w\in B^*$ and
$z\in Z$. We set $d(y)=(v, z)$. First, we have to prove that
this definition does not depend on the choice of $x$.

Assume that we have another $x'\in X$ so that $g(x')=(u',y)$ and
assume $f(x')=(w',z')$. Using the fact that $m\in \MonoKlT$, we will
show that $z=z'$, and thus $d(y)$ is well defined. Indeed, notice that

\begin{tabular}{lll}
$\begin{cases}
  g\circ e(x)=(uv,t)\\
  g\circ e(x')=(u'v,t)\,,\\
\end{cases}
$
&
 or equivalently, 
&
$
\begin{cases}
  m\circ f(x)=(uv,t)\\
  m\circ f(x')=(u'v,t)\,,\\
\end{cases}
$
\end{tabular}

\noindent Assume that $m(z)=(\varepsilon, t_1)$ and
$m(z')=(\varepsilon, t_2)$. This entails
\[
\begin{cases}
  m\circ f(x)=(uv,t)=(w,t_1)\\
  m\circ f(x')=(u'v,t)=(w',t_2)\,.\\
\end{cases}
\]
We obtain that $t_1=t_2=t$. Since $m\in\MonoKlT$ (and thus $\pi_2(m)$
is injective) we get that $z=z'$, which is what we wanted to prove.
It is easy to verify that $d\circ e=f$ and $m\circ d=g$.
\end{proof}

This completes the proof of Theorem~\ref{theorem:minimization
  transducer}.

\section{Brzozowski's minimization algorithm}
\label{sec:brzozowski}

\newrobustcmd\determinize{\kl[\determinize]{\mathtt{determinize}}}
\newrobustcmd\codeterminize{\kl[\codeterminize]{\mathtt{codeterminize}}}
\newrobustcmd\transpose{\kl[\transpose]{\mathtt{transpose}}}

\subsection{Presentation}

""Brzozowski's algorithm"" is a minimization algorithm for
automata. It takes as input a "non-deterministic@non-deterministic
automaton" "automaton@non-deterministic automaton"~$\mathcal{A}$, and
computes the deterministic automaton:
\begin{align*}
  \determinize(\transpose(\determinize(\transpose(\mathcal A)))),
\end{align*}
in which
\begin{itemize}
\item\AP $\intro*\determinize$ is the operation from classical
  automata theory that takes as input a deterministic automaton,
  performs the powerset construction to it and at the same time
  restricts to the reachable states, yielding a deterministic
  automaton, and
\item\AP $\intro*\transpose$ is the operation that takes as input a
  non-deterministic automaton, reverses all its edges, and swaps the
  role of initial and final states (it accepts the mirrored language).
\end{itemize}
\AP In this section, we will establish the correctness of Brzozowski's
algorithm: this sequence of operations yields the minimal automaton
for the language. For easing the presentation we shall present the
algorithm in the form:
\begin{align*}
  \determinize(\codeterminize(\mathcal A)),
\end{align*}
in which $\intro*\codeterminize$ is the operation that takes a
non-deterministic automaton, and constructs a backward deterministic
one (it is equivalent to the sequence
$\transpose\circ\determinize\circ\transpose$).

In the next section, we show how $\determinize$ and~$\codeterminize$
can be seen as adjunctions, and we use it immediately after in a
correctness proof of "Brzozowski's algorithm".

\subsection{Non-deterministic automata and determinization}
\label{sec:det-codet-adj}

\noindent
\begin{minipage}{12cm}
  A ""non-deterministic automaton"" is completely determined by the
  relations described in the following diagram, where we see the
  initial states as a relation from $1$ to the set of states $Q$, the
  final states as a relation from $Q$ to $1$ and the transition
  relation by any input letter $a$, as a relation on the set $Q$.
\end{minipage}
~~
\begin{tikzcd}
  1\arrow[r,negated,"i"] & Q\arrow[loop,looseness=6,negated,swap,
  "\delta_a"]\arrow[r,negated,"f"] & 1
\end{tikzcd}
\medskip

We can model nondeterministic automata as functors by taking as output
category $\Rel$ -- the category whose objects are sets and maps are
relations between them. We consider "$\Rel$-automata" $\autA\colon
\catI\to\Rel$ such that $\autA(\objectLeft)=1$ and
$\autA(\objectRight)=1$. In this section we show how to determinize a
"$\Rel$-automaton", that is, how to turn it into a "$\Set$-automaton"
and how to codeterminize it, that is, how to obtain a
"$\Setop$-automaton", all recognizing the same language.

Given a language $L\subseteq A^*$ we can model it in several
equivalent ways: as a "$(\Set,1,2)$-language" $\langLSet$, or as a
"$(\Setop,2,1)$-language" $\langLSetop$, or, lastly as a
"$(\Rel,1,1)$-language" $\langLRel$. This is because we can model the
fact $w\in L$ using a morphisms in either of the three isomorphic
hom-sets
\[
\Set(1,2)\cong \Setop(2,1)\cong \Rel(1,1)\,.
\]
Determinization and codeterminization (without assuming the
restriction to reachable states as in $\determinize$ and
$\codeterminize$) of a $\Rel$-automaton can be seen as applications of
Lemma~\ref{lem:lifting-adjunctions} and are obtained by lifting the
adjunctions between $\Set$, $\Rel$ and $\Setop$.
\begin{equation}
  \label{eq:2}
  \begin{tikzcd}[column sep={0.9mm},row sep={3.5mm}]
    \catAutoLSet\ar[ddd,""] \ar[rr,bend left=15, "\overline{\FPow}"] &
    \bot &\catAutoLRel\ar[ll,bend left=15,
    "\overline{\UPow}"]\ar[ddd,""]
    \ar[rr,bend left=15, "\overline{\UPowop}"] & \bot & \catAutoLSetop \ar[ll,bend left=15, "\overline{\FPowop}"]\ar[ddd," "]\\
    & & \\
    & & \\
    \Set \ar[rr,bend left=20, "\FPow"] & \bot &\Rel\ar[ll,bend
    left=20, "\UPow"] \ar[rr,bend left=20, "\UPowop"]& \bot & \Setop
    \ar[ll,bend left=20, "\FPowop "]
  \end{tikzcd}
\end{equation}
\AP\intro[\FPowop]{}%
 The adjunction between $\Set$ and $\Rel$ is the Kleisli adjunction
for the powerset monad: $\intro*\FPow$ is identity on objects and maps
a function $f\colon X\to Y$ to itself $f\colon X\tokl Y$, but seen as
a relation.
\AP\intro[\UPowop]{}%
The functor $\intro*\UPow$ maps $X$ to its powerset
$\Pow(X)$, and a relation $R\colon X\to Y$ to the function
$\UPow(R)\colon\Pow(X)\to\Pow(Y)$ mapping $A\subseteq X$ to $\{y\in
Y\mid\exists x\in A. (x,y)\in R\}$.

\phantomintro\UPowop\phantomintro\FPowop The adjunction between
$\Setop$ and $\Rel$ is the dual of the previous one, composed with the
self-duality of $\Rel$.  The left adjoint $\overline\FPow$ transforms
a deterministic automaton into a non-deterministic one, while the
right adjoint $\overline{\UPow}$ is the ""determinization
functor"". On the other hand, the left adjoint functor
$\overline{\UPowop}$ is the ""codeterminization functor"".

\subsection{Brzozowski's minimization algorithm}
\label{sec:duality}

The correctness of "Brzozowski's algorithm" can be seen in the
following chain of adjunctions from Lemma~\ref{lem:adj-reach-obs} and
\eqref{eq:2} (that all correspond to equivalences at the level of
languages):
\[
\begin{tikzcd}[column sep={14mm,between origins}]
  \catReachAutoLSet \ar[rr,bend left, hook,"E"] &\bot
  &\catAutoLSet\ar[rr,bend left, "\overline{\FPow}"]\ar[ll,bend left,
  "\Reach"] & \bot &\catAutoLRel\ar[ll,bend left, "\overline{\UPow}"]
  \ar[rr,bend left, "\overline{\UPowop}"] & \bot & \catAutoLSetop
  \ar[ll,bend left, "\overline{\FPowop}"]\ar[rr,bend left, "\Obs"]
  &\bot~ & \catObsAutoLSetop\ar[ll,bend left, hook,"E"]
\end{tikzcd}
\]

A path in this diagram corresponds to a sequence of transformations of
automata. It happens that when~$\Obs$ is taken, the resulting
automaton is "observable", i.e., there is an injection from it to the
"final object". This property is preserved under the sequence of right
adjoints $\Reach\circ\overline{\UPow}\circ 
\overline{\FPowop}\circ E$. Furthermore, after application
of~$\Reach$, the automaton is also "reachable". This means that
applying the sequence
$\Reach\circ\overline{\UPow}\circ\overline{\FPowop}\circ
E\circ\Obs\circ\overline{\UPowop}$ 
to a non-deterministic automaton produces a deterministic and minimal
one for the same language.  We check for concluding that the sequence
$\Obs\circ\overline{\UPowop}$ is what is implemented by
$\codeterminize$, that the composite $\overline{\FPowop}\circ E$
essentially transforms a backward deterministic "observable" automaton
into a non-deterministic one, and that finally
$\Reach\circ\overline{\UPow}$ is what is implemented by
$\determinize$. Hence, this indeed is "Brzozowski's algorithm".

\begin{remark}
  The composite of the two adjunctions in~\eqref{eq:2} is almost the
  adjunction of~\cite[Corollary~9.2]{BonchiBHPRS14} upon noticing that
  the category $\catAutoLSetop$ of $\Set^\op$-automata accepting a
  language $\langLSetop$ is isomorphic to the opposite of the category
  $\catAutoLSetrev$ of $\Set$-automata that accept the reversed
  language seen as functor $\langLSetop$. This observation in turn can
  be proved using the symmetry of the input category $\catI$.
\end{remark}

\section{Conclusion}
\label{sec:conclusion}
In this paper we propose a view of automata as functors and we showed
how to recast well understood classical constructions in this setting,
and in particular minimization of subsequential transducers. The
applications provided here are a small sample of many possible further
extensions. We argue that this perspective gives a unified view of
language recognition and syntactic objects. We can change the input
category $\catI$, so that we obtain monoids instead of automata, or
more generally, other algebras as recognisers for
languages. Minimization works out following the same recipe and yields
the syntactic monoid (algebra) of a language. We can go beyond regular
languages and obtain in this fashion the ``syntactic space with an
internal monoid'' of a possibly non-regular
language~\cite{GehrkePR16}. Our functorial treatment of automata is
more general than that presented in~\cite{Ballester-Bolinches15} and
it would be interesting to explore equations and coequations in this
setting. We hope we can extend the framework to work with tree
automata in monoidal categories. We discussed mostly NFA
determinization, but we can obtain a variation of the generalized
powerset construction~\cite{SilvaEtAl:genPow} in this framework.

\end{document}